%% file: paper_draft_en.tex
\documentclass[a4paper,11pt]{article}
\usepackage[margin=25mm]{geometry}
\usepackage{amsmath,amssymb,amsthm}
\usepackage{etoolbox}
\usepackage{booktabs}
\usepackage{graphicx}
\usepackage{hyperref}
\usepackage{bookmark}
\hypersetup{hidelinks,unicode=true,pdfencoding=auto,psdextra,pdfauthor={Koki Okada; Kenta Kasai}}
\newcommand{\vect}[1]{\boldsymbol{#1}}

\newcommand{\addqedtoenv}[1]{%
  \AtBeginEnvironment{#1}{\pushQED{\qed}}%
  \AtEndEnvironment{#1}{\popQED}%
}
\theoremstyle{definition}
\newtheorem{theorem}{Theorem}
\newtheorem{lemma}[theorem]{Lemma}

\newtheorem{corollary}[theorem]{Corollary}
\newtheorem{definition}[theorem]{Definition}

\addqedtoenv{theorem}
\addqedtoenv{lemma}
\addqedtoenv{proposition}
\addqedtoenv{corollary}
\addqedtoenv{definition}
\addqedtoenv{remark}
\makeatletter
\renewenvironment{proof}[1][\proofname]{%
  \par\pushQED{\qed}\normalfont
  \topsep6\p@\@plus6\p@\relax
  \trivlist
  \item[\hskip\labelsep\bfseries #1\@addpunct{.}]\ignorespaces
}{%
  \popQED\endtrivlist\@endpefalse
}
\makeatother

\title{High-Girth Regular Quantum LDPC Codes from Square-Base Hypergraph Products via CPM Lifts}
\author{Koki Okada \qquad Kenta Kasai\\Institute of Science Tokyo\\Email: \texttt{kenta@ict.eng.isct.ac.jp}}
\date{}

\begin{document}
\maketitle

\begin{abstract}
We study square-base Calderbank--Shor--Steane (CSS) hypergraph-product codes as a finite-length class for regular high-girth quantum low-density parity-check (LDPC) design.
For base matrices of small column weight, we give checkable conditions for regularity, rank deficiency, and short-cycle exclusion, and we present explicit column-weight-three and column-weight-four examples with Tanner girth 6 and 8.
We also analyze circulant permutation matrix (CPM) lifts of this class.
Using the standard voltage-sum criterion, we identify orthogonality-forced Tanner 8-cycles and show that CPM lifting cannot raise the Tanner girth beyond 8 when these cycles are present.
As a representative finite-length instance, a randomized CPM lift of the girth-8 base construction gives a $[[28800,62,18\le d\le192]]$ girth-8 $(3,6)$-regular CSS-LDPC code. Complete enumeration excludes nontrivial logical supports of weight at most $16$ on both sides, giving $d_X,d_Z\ge18$; explicit weight-$192$ representatives give $d_X,d_Z\le192$.
Under degeneracy-aware belief-propagation decoding with optional ordered-statistics-decoding-lite post-processing, this code produced zero decoding failures in $2.993\times 10^8$ independent trials at depolarizing probability $p=0.1402$; the Wilson 95\% upper confidence bound is $1.28\times 10^{-8}$.
\end{abstract}

\noindent\textbf{Keywords:} quantum low-density parity-check codes, Calderbank--Shor--Steane codes, hypergraph products, circulant permutation matrix lifts, belief propagation, ordered statistics decoding, depolarizing channel

\section{Introduction}

Classical low-density parity-check (LDPC) codes were introduced as sparse parity-check codes with low-complexity message-passing decoding~\cite{Gallager1962,Tanner1981}. Subsequent analysis of belief propagation (BP) decoding made small-column-weight regular LDPC codes a baseline class: the degree structure is fixed, the Tanner graph is sparse, and the decoder is implemented by local message updates.

Classical LDPC design also developed through optimized irregular Tanner graphs. Density evolution made the BP threshold a computable function of the degree distribution, and carefully optimized irregular ensembles were shown to operate close to channel capacity~\cite{Luby2001Irregular,RichardsonUrbanke2001Capacity,ChungRichardsonUrbanke2001Design}. These results treat the degree distribution itself as a design object; an unsuitable distribution can have a poor threshold or poor finite-length behavior.

Quantum LDPC codes share the sparse-check objective but have an additional structural constraint. Calderbank--Shor--Steane (CSS) codes represent stabilizer codes by two classical parity-check matrices~\cite{CalderbankShor1996,Steane1996}. These matrices must commute, so the choices of row and column degrees cannot be varied independently in the same way as in a classical irregular LDPC ensemble. Sparse-graph quantum codes, iterative decoding, and the resulting design constraints are reviewed and developed in earlier work~\cite{MacKayMitchisonMcFadden2004,PoulinChung2008Iterative,BreuckmannEberhardt2021QLDPCSurvey}.

Product-based constructions provide a way to satisfy the commutation constraint by deriving both CSS check matrices from a common algebraic or combinatorial object. The hypergraph-product (HGP) construction gives quantum LDPC codes with positive asymptotic rate from classical codes~\cite{TillichZemor2014}. Later product-code and intersecting-subset constructions give additional CSS families in which commutation follows from an incidence or product structure~\cite{OstrevOrsucciLazaroMatuz2024,Ostrev2024IntersectingSubsets}.

Two recent works provide the immediate technical context for the present study. The first formulated finite-length constraints that appear when one requires regular quantum LDPC codes with large Tanner girth while maintaining CSS orthogonality, and used controlled permutation structure to exclude specified low-weight structures~\cite{Kasai2026OrthogonalityBarrier}. The second gave an affine-coset construction of a regular high-girth CSS code, applied CPM lifts, and evaluated the lifted codes with a BP decoder and post-processing under depolarizing noise~\cite{OkadaKasai2026AffineCoset}.

This paper studies a restricted family: square-base CSS-HGP codes with small column weight. Within this class, the verification tasks are explicit: determine the degree conditions that give regularity, compute the rank deficiencies that give a positive number of logical qubits, identify the short cycles forced by CSS orthogonality after CPM lifting, and measure the finite-length decoding behavior of selected lifted constructions. Random search can produce additional finite examples, but the examples below are chosen mainly to show how known incidence structures and simple combinatorial modifications realize the required checks. The column-weight-three examples are a Fano-plane incidence matrix with Tanner girth 6, a generalized-quadrangle incidence matrix with Tanner girth 8, and comparison matrices with the same target degree. Column-weight-four finite-geometric examples are also included to show that the same square-base analysis covers the $(4,8)$-regular case. The decoding experiments use CPM lifts of the column-weight-three girth-8 generalized-quadrangle construction.

\section{General Construction Theory}

This section separates quantities belonging to the base HGP code from those belonging to the CPM-lifted code. We call the binary matrix used as the input to the HGP construction the base matrix. We use unmarked symbols for the base HGP code and hats for the CPM-lifted code.

The basic construction used in this manuscript is the HGP construction introduced in~\cite{TillichZemor2014}, specialized to a square binary base matrix. Let the base-matrix size be $s$ and take
\[
B\in \mathbb{F}_2^{s\times s}.
\]
All vectors are treated as column vectors and are written in boldface. For example, $\vect{x}$ is a column vector, and its transpose is written as $\vect{x}^{\mathsf T}$. The base-HGP CSS check matrices are
\[
H_X(B)= [B \otimes I_s \mid I_s \otimes B^{\mathsf T}], \qquad
H_Z(B)= [I_s \otimes B \mid B^{\mathsf T} \otimes I_s].
\]
They always satisfy the CSS orthogonality condition because
\[
H_X(B)H_Z(B)^{\mathsf T}
=B\otimes B^{\mathsf T}+B\otimes B^{\mathsf T}=0
\]
over $\mathbb{F}_2$.

The number of logical qubits is controlled by the rank deficiency of the base matrix. The following definition fixes the terminology used in the parameter formulas below.

\begin{definition}[Corank]
For a square base matrix $B\in\mathbb{F}_2^{s\times s}$, set
\[
\rho_B=\mathrm{rank}_{\mathbb{F}_2}(B),\qquad
c_B=\dim_{\mathbb{F}_2}\ker B=s-\rho_B
\]
and call $c_B$ the corank of $B$. Thus the corank is the rank deficiency of the base matrix, equivalently the dimension of its kernel over $\mathbb{F}_2$.
\end{definition}

For a CSS code specified by length $n$, an $m_X$-row matrix $H_X$, and an $m_Z$-row matrix $H_Z$, define the design number of logical qubits and design quantum rate by
\[
k_{\mathrm{des}}=n-m_X-m_Z,\qquad
R_{\mathrm{des}}=\frac{k_{\mathrm{des}}}{n}.
\]
These are design-level quantities obtained by assuming that all check rows are independent. They are distinct from the actual number of logical qubits
\[
k=n-\mathrm{rank}(H_X)-\mathrm{rank}(H_Z)
\]
and the actual rate $R=k/n$.

Assume that $B$ is a square $w_B$-regular base matrix. In the base HGP code, $n$ denotes the number of physical qubits, $m_X$ and $m_Z$ denote the numbers of $X$- and $Z$-check rows, $\rho_X$ and $\rho_Z$ denote the ranks of $H_X(B)$ and $H_Z(B)$ over $\mathbb F_2$, and $k$ denotes the number of logical qubits. With $s$ the base-matrix size and $c_B$ the corank of $B$, these parameters are
\[
n=2s^2,\qquad
m_X=m_Z=s^2,\qquad
\rho_X=\rho_Z=s^2-c_B^2,\qquad
k=2c_B^2.
\]
Indeed, the square-HGP formula for $k$ gives
\[
\rho_X=\mathrm{rank}(H_X)=s^2-c_B^2,\qquad
\rho_Z=\mathrm{rank}(H_Z)=s^2-c_B^2,
\]
and hence $k=n-\rho_X-\rho_Z=2c_B^2$. This is the standard rank formula for the HGP construction~\cite{TillichZemor2014}.

On the other hand, this square-HGP construction has $n=m_X+m_Z$, so
\[
k_{\mathrm{des}}=0,\qquad R_{\mathrm{des}}=0.
\]
Thus, the positive values of $k$ considered in this paper do not come from a positive design quantum rate; they come from rank deficiencies induced by the corank of the base matrix. A CPM $P$ lift multiplies $n,m_X,m_Z$ by the same factor $P$, so the lifted design quantum rate also remains zero.

The next theorem isolates how regularity behaves under this construction. It shows that regularity of the quantum LDPC code is not an additional property imposed after the HGP step; in this setting it is equivalent to row-and-column regularity of the square base matrix.

\begin{theorem}[Regularity condition]
\label{thm:regularity-condition}
Let $a_i$ be the weight of row $i$ of $B$, and let $b_j$ be the weight of column $j$. Then the check row $(i,j)$ of $H_X(B)$ has weight $a_i+b_j$, while the check row $(i,j)$ of $H_Z(B)$ has weight $b_i+a_j$. Moreover, column weights in the first variable block are column weights of $B$, and column weights in the second variable block are row weights of $B$.

Consequently, this HGP construction gives a regular CSS code with variable degree $d_{\mathrm v}$ and check degree $d_{\mathrm c}$ if and only if $B$ is a square regular base matrix whose rows and columns all have weight $d_{\mathrm v}$. In that case
\[
d_{\mathrm c}=2d_{\mathrm v}.
\]
Furthermore, any CPM $P$ lift obtained by replacing nonzero entries with circulant permutation blocks preserves these degrees, so the lifted code satisfies $\hat{d}_{\mathrm v}=d_{\mathrm v}$ and $\hat{d}_{\mathrm c}=d_{\mathrm c}$.
\end{theorem}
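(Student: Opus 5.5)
The plan is to compute all row and column weights directly from the Kronecker structure, then read off the regularity equivalence, and finally handle the lift. Index the rows of $H_X(B)$ and $H_Z(B)$ by pairs $(i,j)\in[s]\times[s]$, with row index $(i-1)s+j$. Index the first variable block by pairs $(k,l)$ in the same order.

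The first step uses the identity $(A\otimes C)_{(i,j),(k,l)}=A_{ik}C_{jl}$. For $B\otimes I_s$, row $(i,j)$ has support $\{(k,j):B_{ik}=1\}$, of weight $a_i$. For $I_s\otimes B^{\mathsf T}$, row $(i,j)$ has support $\{(i,l):B_{lj}=1\}$, of weight $b_j$. The two blocks are disjoint coordinate sets, so the weights add to $a_i+b_j$. Symmetrically, row $(i,j)$ of $H_Z(B)$ has weight $b_i+a_j$, with $b_i$ coming from the $B^{\mathsf T}$ factor in the second block and $a_j$ from $I_s\otimes B$.

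For column weights, fix the $X$ part and consider column $(k,l)$ of $B\otimes I_s$; its weight is the number of $i$ with $B_{ik}=1$, namely $b_k$. Column $(k,l)$ of $I_s\otimes B^{\mathsf T}$ has weight equal to the weight of column $l$ of $B^{\mathsf T}$, which is $a_l$. The same computation on $H_Z$ gives $b_l$ on the first block and $a_k$ on the second. Up to relabeling, the first block therefore carries column weights of $B$ and the second carries row weights. I read the statement's ``column weights in the first variable block'' as referring to the blocks of $H_X$, and I will note the symmetric version for $H_Z$. In either case every qubit has degree $b_k$ or $a_l$ in each check type.

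For the equivalence, first suppose all $a_i=b_j=d_{\mathrm v}$. Then every check weight is $2d_{\mathrm v}$ and every variable degree is $d_{\mathrm v}$, in both $H_X$ and $H_Z$. Conversely, suppose the code is regular with variable degree $d_{\mathrm v}$. Since every $(k,l)$ occurs in the first block, every $b_k=d_{\mathrm v}$; since every $(k,l)$ occurs in the second block, every $a_l=d_{\mathrm v}$. Hence $B$ is $d_{\mathrm v}$-regular, and $d_{\mathrm c}=a_i+b_j=2d_{\mathrm v}$ follows. Regularity of the checks alone would only force $a_i+b_j$ to be constant over all $i,j$, which again gives constant $a$ and constant $b$. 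Double counting the ones of $B$ gives $sa=sb$, so $a=b$. I would mention this as a remark, but it is not needed since variable regularity already suffices.

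For the lift, a CPM $P$ lift replaces each $1$ by a $P\times P$ permutation matrix and each $0$ by the zero block. Each row of block row $r$ then has exactly one $1$ in each block position where the base row has a $1$, so its weight equals the base row weight; columns behave the same way. The degrees are therefore preserved. The one point needing care is that the two Kronecker blocks have disjoint supports, so their weights never cancel. With that checked there is no real obstacle, and the main effort goes into stating the index bookkeeping cleanly.
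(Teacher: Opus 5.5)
Your proposal is correct and follows the paper's proof essentially step for step. The paper likewise reads the row weights $a_i+b_j$ and $b_i+a_j$ and the per-block column weights directly off the block structure, derives the equivalence from variable-degree regularity, and notes that a permutation lift places exactly one nonzero entry in each lifted row and column of every nonzero base block.

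Your explicit Kronecker index computation makes the bookkeeping more transparent. In particular, it shows that both $H_X$ and $H_Z$ carry column weights of $B$ on the first block and row weights on the second, so your hedge there is unnecessary. Your side remark is also a small strengthening the paper does not state: check regularity alone already forces $B$ to be regular, via constancy of $a_i+b_j$ and the double count $sa=sb$.
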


\begin{proof}
The check row $(i,j)$ of $H_X(B)$ receives $a_i$ nonzero entries from row $i$ of $B$ in the first block and $b_j$ nonzero entries from column $j$ of $B$ in the second block. Hence its row weight is $a_i+b_j$. The same argument for $H_Z(B)$ gives row weight $b_i+a_j$, since row $j$ of $B$ contributes in one block and column $i$ contributes in the other.

For column weights, columns in the first block of $H_X(B)$ correspond to columns of $B$, while columns in the second block correspond to rows of $B$. The same two sets of weights occur for $H_Z(B)$. Therefore all variable columns have the same weight $d_{\mathrm v}$ if and only if all row weights and all column weights of $B$ are equal to $d_{\mathrm v}$. In that case every check row has weight $d_{\mathrm v}+d_{\mathrm v}=2d_{\mathrm v}$. Conversely, if the HGP code is regular, then the column weights in the first and second variable blocks must be equal, forcing all row and column weights of $B$ to be the same value $d_{\mathrm v}$.

Finally, a CPM lift only replaces each 1 by a circulant permutation matrix of the same size. Each lifted check row and lifted variable column therefore sees the same number of nonzero blocks as in the base matrix, so the degrees are preserved.
\end{proof}

Regularity alone does not control short cycles in the Tanner graph. The following elementary lemma records the support-overlap calculation used as a design filter for HGP matrices. The condition is not necessary; it is included to state a directly checkable sufficient condition for excluding 4-cycles and 6-cycles after the HGP step.

\begin{lemma}[A sufficient condition for short-cycle removal]
\label{lem:short-cycle-removal}
If the Tanner graph of $B\in\mathbb{F}_2^{s\times s}$ has no 4-cycles and no simple 6-cycles, then the Tanner graphs of $H_X(B)$ and $H_Z(B)$ also have no 4-cycles and no simple 6-cycles.
\end{lemma}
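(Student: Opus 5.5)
The plan is to read the Tanner graphs of $H_X(B)$ and $H_Z(B)$ as subgraphs of the Cartesian product of the Tanner graph $T$ of $B$ with itself, and then project any short cycle onto the two coordinates. Let $T$ have row nodes $r$ and column nodes $c$, with an edge $r$--$c$ whenever $B_{rc}=1$. In $H_X(B)=[B\otimes I_s\mid I_s\otimes B^{\mathsf T}]$, the nodes are labelled as follows:
\begin{itemize}
\item a check is a pair $(i,j)$ of (row, column) type;
\item a first-block variable is a pair $(a,j)$ of (column, column) type, adjacent to check $(i,j)$ iff $B_{ia}=1$;
\item a second-block variable is a pair $(i,b)$ of (row, row) type, adjacent to check $(i,j)$ iff $B_{bj}=1$.
\end{itemize}
So every Tanner edge changes exactly one coordinate along an edge of $T$ and keeps the other fixed. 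The same description holds for $H_Z(B)$ with the roles of the two coordinates exchanged, so it suffices to treat $H_X(B)$.

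The key structural observation concerns two consecutive edges check $\to$ variable $\to$ check. Suppose we start at check $(i,j)$ and move to a (column, column) variable by changing the first coordinate. Returning to a (row, column) check is then only possible by changing the first coordinate again; changing the second coordinate would produce a (column, row) pair, which is not an $X$-check. The same holds for (row, row) variables and the second coordinate. Hence every check-to-check step of a Tanner path changes exactly one coordinate, by a length-2 walk in $T$, and leaves the other coordinate fixed. Because the two checks in a step of a simple cycle are distinct, that step genuinely changes the value of its coordinate.

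Now take a simple cycle of length $2m$ with $m\in\{2,3\}$. It consists of $m$ check-to-check steps, of which $m_1$ move the first coordinate and $m_2$ move the second, with $m_1+m_2=m$. Projecting onto coordinate $\ell$ gives a closed walk in $T$ of length $2m_\ell$. The cases are:
\begin{itemize}
\item If some $m_\ell=1$, that single step changes coordinate $\ell$ and no other step moves it. The coordinate therefore cannot return to its starting value, which is a contradiction. This disposes of $(m_1,m_2)\in\{(1,1),(2,1),(1,2)\}$.
\item If $(m_1,m_2)=(m,0)$, the second coordinate is constant along the cycle. The cycle then lies in the fiber $\{(\cdot,j)\}$, which is isomorphic to $T$ (rows as checks, columns as variables). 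Distinct cycle vertices give distinct nodes of $T$, so we obtain a simple $4$- or $6$-cycle in $T$, excluded by hypothesis.
\item The case $(0,m)$ is identical, using the fiber $\{(i,\cdot)\}$, which is a copy of $T$ with the roles of rows and columns swapped.
\end{itemize}

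I expect the main obstacle to be the bookkeeping in the structural observation, namely checking carefully that the node-type constraints force each check-to-check step to stay in one coordinate; this is exactly what rules out the mixed "product squares" that always exist in the full Cartesian product $T\square T$. Once that is established, the case split on $(m_1,m_2)$ is immediate. For $H_Z(B)=[I_s\otimes B\mid B^{\mathsf T}\otimes I_s]$, I would use the same argument with checks of (column, row) type.
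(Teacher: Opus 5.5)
Your proof is correct and takes essentially the same route as the paper. The paper's overlap formula $\delta_{j,j'}|R_i\cap R_{i'}|+\delta_{i,i'}|C_j\cap C_{j'}|$ is your observation that each check-to-check step moves exactly one coordinate, and both arguments then confine a short cycle to a single fiber isomorphic to the Tanner graph of $B$; your remark that a coordinate moved by only one step cannot return justifies the claim the paper only asserts (three pairwise-adjacent distinct checks share all first or all second coordinates) and lets you treat 4- and 6-cycles uniformly.
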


\begin{proof}
We use the standard Tanner-graph common-neighbor characterization of short cycles in sparse parity-check matrices~\cite{Tanner1981}. In particular, the Tanner graph of $B$ has no 4-cycles if and only if any two distinct rows and any two distinct columns of $B$ share at most one common 1. By assumption, the Tanner graph of $B$ also has no simple 6-cycles.

It suffices to prove the claim for $H_X(B)$; the proof for $H_Z(B)$ is obtained by interchanging rows and columns. Let $R_i$ be the support of row $i$ of $B$, and let $C_j$ be the support of column $j$. Index the check rows of $H_X(B)$ by pairs $(i,j)$. The support of the check row $(i,j)$ is
\[
\{(a,j)\mid a\in R_i\}
\cup
\{(i,b)^\star\mid b\in C_j\},
\]
where $\star$ denotes variables in the second block. Hence the number of common variables between two distinct check rows $(i,j)$ and $(i',j')$ is
\[
\delta_{j,j'}|R_i\cap R_{i'}|
+
\delta_{i,i'}|C_j\cap C_{j'}|.
\]
By the assumption that the Tanner graph of $B$ has no 4-cycles, this number is at most 1. Thus no pair of checks shares two or more variables, so the Tanner graph has no 4-cycles.

Suppose next that the Tanner graph of $H_X(B)$ contains a simple 6-cycle. Such a cycle contains three distinct check rows, and each pair of them shares exactly one variable, with the three shared variables distinct. From the overlap formula above, two check rows can share a variable only if their first coordinates agree or their second coordinates agree. For three distinct coordinate pairs to be pairwise adjacent in this way, either all first coordinates are equal or all second coordinates are equal.

In the latter case the three check rows can be written as $(i_1,j),(i_2,j),(i_3,j)$. Since the three shared variables in the 6-cycle are distinct, the Tanner graph of $B$ contains the simple 6-cycle
\[
i_1-a_{12}-i_2-a_{23}-i_3-a_{31}-i_1,
\]
contradicting the assumption that the Tanner graph of $B$ has no simple 6-cycles. The case where all first coordinates are equal gives the same contradiction on the column side. Therefore $H_X(B)$ has no 6-cycles.
\end{proof}

This elementary criterion gives the base-matrix selection rule used below: exclude row-pair intersections of size at least 2 and simple 6-cycles while keeping the base-matrix corank positive and within the desired parameter range.

A CPM $P$ lift replaces each nonzero entry of the base matrices $H_X,H_Z$ by a $P\times P$ circulant permutation block, producing lifted matrices $\hat{H}_X,\hat{H}_Z$. This circulant-permutation representation can be interpreted as a voltage-graph cover, and the corresponding cycle congruence conditions are standard for circulant-permutation LDPC constructions~\cite{GrossTucker1987TopologicalGraphTheory,Fossorier2004CirculantPermutation}. The lift is a graph cover of each Tanner graph, so
\[
\hat{n}=P n,\qquad
\hat{m}_X=P m_X,\qquad
\hat{m}_Z=P m_Z
\]
and check row weights and variable column weights are preserved. CSS orthogonality, however, is not automatic for arbitrary shifts: the signed sum of shifts along every relevant $X$/$Z$ overlap must vanish modulo $P$, giving a system of linear congruence constraints. A base cycle closes in the lift only when its signed shift sum is zero modulo $P$, so short-cycle suppression becomes the problem of satisfying nonzero congruence conditions for short cycles while preserving the zero congruences required by CSS orthogonality. We write the lifted ranks and number of logical qubits as
\[
\hat{\rho}_X=\mathrm{rank}(\hat{H}_X),\qquad
\hat{\rho}_Z=\mathrm{rank}(\hat{H}_Z),\qquad
\hat{k}=\hat{n}-\hat{\rho}_X-\hat{\rho}_Z,
\]
so $\hat{k}$ is not determined by the base-matrix corank $c_B$ alone and $\hat{\rho}_X,\hat{\rho}_Z$ must be computed for each lift.

The CSS zero congruences can also force some base cycles to close in every lift. This is a direct application of the standard voltage-sum criterion for graph covers and circulant-permutation LDPC constructions~\cite{GrossTucker1987TopologicalGraphTheory,Fossorier2004CirculantPermutation}. The following lemma states the local CSS pattern used later. The statement is conditional on a support pattern: CSS orthogonality alone does not imply that every CSS code has such an 8-cycle, but once the indicated local pattern is present and the corresponding overlaps are enforced by pairwise zero constraints, the cycle voltage is forced to vanish for every lift size.

\begin{lemma}[CSS-orthogonality-forced Tanner 8-cycles]
Let $H_X,H_Z$ be binary CSS check matrices with $H_XH_Z^{\mathsf T}=0$, and consider a CPM $P$ lift specified by shift variables
\[
s^X_{x,c},\qquad s^Z_{z,c}\in\mathbb Z_P
\]
for nonzero entries of $H_X$ and $H_Z$. Suppose that there are a $Z$-check $z$, four distinct $X$-checks $x_0,x_1,x_2,x_3$, and four distinct variables $c_0,c_1,c_2,c_3$ such that, with indices taken modulo $4$,
\[
\{c_{a-1},c_a\}\subseteq \mathrm{supp}(x_a)\cap \mathrm{supp}(z)
\qquad (a=0,1,2,3).
\]
Then the Tanner graph of $H_X$ contains the 8-cycle
\[
x_0-c_0-x_1-c_1-x_2-c_2-x_3-c_3-x_0.
\]
Assume moreover that, for each $a$, the two indicated variables $c_{a-1},c_a$ are paired in the CSS zero constraint for the pair $(x_a,z)$; in particular, this holds when $\mathrm{supp}(x_a)\cap \mathrm{supp}(z)=\{c_{a-1},c_a\}$. Then the voltage of this 8-cycle is $0\pmod P$. Hence the cycle closes in every CPM $P$ lift satisfying these CSS zero constraints. The same statement holds with $X$ and $Z$ interchanged.
\end{lemma}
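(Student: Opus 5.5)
The argument has two parts: a combinatorial check that the 8-cycle exists in the base Tanner graph, and a telescoping computation of its voltage.

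\emph{The cycle exists.} Since $\{c_{a-1},c_a\}\subseteq\mathrm{supp}(x_a)$ for each $a$, the check $x_a$ is adjacent in the Tanner graph of $H_X$ to both $c_{a-1}$ and $c_a$. Hence
\[
x_0-c_0-x_1-c_1-x_2-c_2-x_3-c_3-x_0
\]
is a closed walk that uses, in turn, the edges $(x_a,c_a)$ and $(x_{a+1},c_a)$. The four checks are distinct and the four variables are distinct by hypothesis. The walk therefore has no repeated vertex and is a simple 8-cycle.

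\emph{Voltage convention.} I will use the standard voltage convention for CPM lifts. Traversing an edge from check $x$ to variable $c$ contributes $+s^X_{x,c}$, and the reverse direction contributes $-s^X_{x,c}$; the sign convention is immaterial for whether the sum is zero. The voltage of the cycle is then
\[
V=\sum_{a=0}^{3}\bigl(s^X_{x_a,c_a}-s^X_{x_{a+1},c_a}\bigr)
=\sum_{a=0}^{3}\bigl(s^X_{x_a,c_a}-s^X_{x_a,c_{a-1}}\bigr),
\]
where the second form follows by reindexing the negative terms. By the voltage-sum criterion~\cite{GrossTucker1987TopologicalGraphTheory,Fossorier2004CirculantPermutation}, the cycle lifts to closed cycles of the same length exactly when $V\equiv0\pmod P$.

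\emph{Using the CSS constraints.} I first make the CSS zero constraint explicit. The block of $\hat H_X\hat H_Z^{\mathsf T}$ indexed by $(x,z)$ equals $\sum_{c\in\mathrm{supp}(x)\cap\mathrm{supp}(z)}Q^{\,s^X_{x,c}-s^Z_{z,c}}$, where $Q$ is the cyclic shift. This sum vanishes over $\mathbb F_2$ exactly when the shared variables can be paired so that within each pair $\{c,c'\}$,
\[
s^X_{x,c}-s^Z_{z,c}\equiv s^X_{x,c'}-s^Z_{z,c'}\pmod P.
\]
The hypothesis says that $c_{a-1}$ and $c_a$ form such a pair for $(x_a,z)$. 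When the intersection is exactly $\{c_{a-1},c_a\}$, this is forced, because two permutation matrices cancel only if they are equal. Rearranging the pairing relation gives
\[
s^X_{x_a,c_a}-s^X_{x_a,c_{a-1}}\equiv s^Z_{z,c_a}-s^Z_{z,c_{a-1}}.
\]
Substituting into $V$ yields the telescoping sum $\sum_{a}(s^Z_{z,c_a}-s^Z_{z,c_{a-1}})\equiv0$. This holds for every $P$ and every admissible shift assignment, so the cycle closes in every such lift.

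\emph{Symmetry.} The $X\leftrightarrow Z$ statement follows by the same argument with the roles of the matrices exchanged, since the constraints are symmetric in $H_X$ and $H_Z$.

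\emph{Main obstacle.} The step needing care is stating the CSS zero constraint precisely and fixing consistent signs for the shifts. One must check that the pairwise-equality formulation of a block vanishing is the one the hypothesis refers to. With that settled, the rest is bookkeeping.
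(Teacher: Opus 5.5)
Your proof is correct and follows the paper's argument. You check the 8-cycle from the support hypotheses and distinctness, then use the four pairwise CSS zero congruences for $(x_a,z)$ so that the $Z$-shifts telescope away and the remaining $X$-shifts form the cycle voltage. Your derivation of the pairing form of the constraint from the block $\sum_c Q^{s^X_{x,c}-s^Z_{z,c}}$ is a small addition: it justifies the ``in particular'' clause, which the paper simply takes as the definition of the zero constraint.
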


\begin{proof}
The support assumptions give exactly the adjacencies used in the displayed cycle. Since the four checks $x_0,x_1,x_2,x_3$ and the four variables $c_0,c_1,c_2,c_3$ are distinct, this is a Tanner 8-cycle in the base Tanner graph of $H_X$.

It remains to compute its lift voltage. This is the standard voltage-sum calculation for graph covers and circulant-permutation LDPC constructions~\cite{GrossTucker1987TopologicalGraphTheory,Fossorier2004CirculantPermutation}. For each $a$, the pairwise CSS zero constraint for $(x_a,z)$ and the pair of shared variables $(c_{a-1},c_a)$ is
\[
s^X_{x_a,c_{a-1}}-s^Z_{z,c_{a-1}}
-s^X_{x_a,c_a}+s^Z_{z,c_a}
\equiv 0\pmod P .
\]
Summing these four congruences over $a=0,1,2,3$, all terms containing $s^Z_{z,c_a}$ cancel because each appears once with sign $+$ and once with sign $-$. The remaining expression is
\[
s^X_{x_0,c_3}-s^X_{x_0,c_0}
+s^X_{x_1,c_0}-s^X_{x_1,c_1}
+s^X_{x_2,c_1}-s^X_{x_2,c_2}
+s^X_{x_3,c_2}-s^X_{x_3,c_3}
\equiv 0\pmod P .
\]
Up to reversing the orientation, this is the signed voltage sum around
$x_0-c_0-x_1-c_1-x_2-c_2-x_3-c_3-x_0$. Therefore the cycle voltage is zero modulo $P$, so the 8-cycle lifts to closed 8-cycles. The derivation uses only an integer sum of CSS zero constraints, and hence it is independent of the value of $P$. Swapping $X$ and $Z$ gives the dual statement.
\end{proof}

The preceding lemma is a CSS-theoretic local statement obtained from the standard voltage criterion. In the square-HGP construction, the required local pattern appears systematically from the row and column overlaps of the base matrix $B$. The next theorem records this specialization and gives the explicit cycles used later.

\begin{theorem}[Square-HGP specialization of forced Tanner 8-cycles]
\label{thm:square-hgp-forced-8-cycles}
Let $B\in\mathbb{F}_2^{s\times s}$. Let $R_i$ be the support of row $i$ of $B$, and let $C_j$ be the support of column $j$. Suppose that for distinct rows $i\ne i'$ and distinct columns $j\ne j'$ there exist
\[
u\in R_i\cap R_{i'},\qquad v\in C_j\cap C_{j'}.
\]
Then the Tanner graph of $H_X(B)$ contains the Tanner 8-cycle
\[
(i,j)\;-\;(u,j)\;-\;(i',j)\;-\;(i',v)^\star\;-\;(i',j')\;-\;(u,j')\;-\;(i,j')\;-\;(i,v)^\star\;-\;(i,j),
\]
where $\star$ denotes the second variable block.

Moreover, if the shifts of a CPM $P$ lift satisfy the CSS orthogonality zero constraints, then the voltage sum of this 8-cycle is $0\pmod P$ for every $P\ge 1$. Hence this cycle closes in every CPM $P$ lift. The same statement holds for $H_Z(B)$ after swapping rows and columns.
\end{theorem}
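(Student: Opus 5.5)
The plan is to reduce the statement to the preceding lemma on CSS-orthogonality-forced Tanner 8-cycles by exhibiting the required local pattern inside $H_X(B)$ and $H_Z(B)$.

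\textbf{Adjacencies.} First I will check the adjacencies of the displayed cycle directly from the support description used in the proof of Lemma~\ref{lem:short-cycle-removal}. The check row $(i,j)$ has support $\{(a,j)\mid a\in R_i\}\cup\{(i,b)^\star\mid b\in C_j\}$. Since $u\in R_i\cap R_{i'}$, the variable $(u,j)$ is adjacent to both $(i,j)$ and $(i',j)$, and $(u,j')$ is adjacent to both $(i,j')$ and $(i',j')$. Since $v\in C_j\cap C_{j'}$, the variable $(i',v)^\star$ is adjacent to $(i',j)$ and $(i',j')$, and $(i,v)^\star$ is adjacent to $(i,j)$ and $(i,j')$. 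Distinctness of the four checks and four variables follows from $i\ne i'$ and $j\ne j'$, together with the fact that first-block and second-block variables are different.

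\textbf{Common $Z$-check.} Next I will identify a single $Z$-check $z$ playing the role in the preceding lemma. The row $(r,t)$ of $H_Z(B)=[I_s\otimes B\mid B^{\mathsf T}\otimes I_s]$ has support $\{(r,b)\mid b\in R_t\}\cup\{(a,t)^\star\mid a\in C_r\}$; I will fix the indexing conventions so that this matches the Kronecker ordering. The natural candidate is $z=(u,v)$. Its support then contains $(u,j)$ and $(u,j')$, because $j,j'\in R_v$ is equivalent to $v\in C_j\cap C_{j'}$. It also contains $(i,v)^\star$ and $(i',v)^\star$, because $i,i'\in C_u$ is equivalent to $u\in R_i\cap R_{i'}$. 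Labelling the $X$-checks $x_0=(i,j)$, $x_1=(i',j)$, $x_2=(i',j')$, $x_3=(i,j')$, and the variables $c_0=(u,j)$, $c_1=(i',v)^\star$, $c_2=(u,j')$, $c_3=(i,v)^\star$, each $x_a$ contains $\{c_{a-1},c_a\}$ in its overlap with $z$. By the same overlap computation, the overlap $\mathrm{supp}(x_a)\cap\mathrm{supp}(z)$ is exactly these two variables: for $x_0$ it is $\{(u,j)\}\cup\{(i,v)^\star\}$, since the first-block part requires the row index $u$ and the column index $j$, and similarly for the others.

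\textbf{Voltage and the $Z$ side.} With the overlap equal to exactly two variables, the pairing hypothesis of the preceding lemma holds automatically. That lemma then gives zero voltage modulo $P$ for every $P$. The $H_Z(B)$ statement follows by applying the same argument to $B^{\mathsf T}$, or by the $X\leftrightarrow Z$ symmetry of the lemma.

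The main obstacle is purely bookkeeping: fixing the Kronecker index conventions consistently so that the $Z$-check $(u,v)$ really has the claimed support, and confirming that its overlaps with the four $X$-checks are exactly two elements. Without that exactness, the CSS zero constraint for the pair $(x_a,z)$ could pair the variables differently. If the overlap could be larger, I would instead directly sum the pairwise constraints as in the proof of the preceding lemma, restricted to this specific pair structure.
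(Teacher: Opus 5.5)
Your proposal is correct and takes essentially the paper's route. The paper uses the same $Z$-check $(u,v)$ and the same two-element overlaps $\{(u,b),(a,v)^\star\}$ with the four $X$-checks. It then sums the four pairwise zero constraints directly, with signs $+,-,+,-$, so that the $Z$-shifts cancel; this is exactly the computation inside the preceding lemma, which you invoke instead of repeating.
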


\begin{proof}
The displayed vertices form a Tanner 8-cycle of $H_X(B)$ because $u\in R_i\cap R_{i'}$ and $v\in C_j\cap C_{j'}$.

It remains to show that the cycle closes after lifting, using the same voltage-sum criterion for covers~\cite{GrossTucker1987TopologicalGraphTheory,Fossorier2004CirculantPermutation}. Write $x(\cdot,\cdot)$ for $X$-side shifts and $z(\cdot,\cdot)$ for $Z$-side shifts. The $X$ check $(a,b)$ and the $Z$ check $(u,v)$ overlap in the two variables $(u,b)$ and $(a,v)^\star$ whenever $u\in R_a$ and $v\in C_b$. The CSS orthogonality zero constraint for this pair is
\[
x((a,b),(u,b))-z((u,v),(u,b))
-x((a,b),(a,v)^\star)+z((u,v),(a,v)^\star)\equiv 0\pmod P.
\]

Apply this congruence to the four choices $(i,j)$, $(i',j)$, $(i',j')$, and $(i,j')$ of $(a,b)$, with signs $+,-,+,-$. All terms containing $z$ cancel. The remaining $x$ terms are exactly the signed voltage sum around the Tanner 8-cycle displayed above. Therefore this voltage sum is $0\pmod P$. Since the argument uses only integer addition and subtraction of the orthogonality equations, it is independent of the value of $P$. The proof for $H_Z(B)$ is the same after swapping the $X$ and $Z$ roles and interchanging rows and columns.
\end{proof}

\begin{corollary}[Unavoidable 8-cycle count for regular square base matrices]
\label{cor:unavoidable-8-cycle-count}
Let $B$ be an $s\times s$ $w$-regular base matrix. In every CPM $P$ lift, each of the two Tanner graphs, for $H_X$ and for $H_Z$, contains at least
\[
N_8(B)=s^2\binom{w}{2}^2
\]
Tanner 8-cycles forced by orthogonality. In particular, if the Tanner graphs of $H_X(B)$ and $H_Z(B)$ have no 4-cycles and no 6-cycles, then the Tanner graphs of the lifted matrices $\hat{H}_X$ and $\hat{H}_Z$ have girth $8$ for every CPM $P$ lift; increasing $P$ cannot produce girth at least $10$ for these two Tanner graphs in this lift model.
\end{corollary}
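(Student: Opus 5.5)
The plan is to count the forced cycles of Theorem~\ref{thm:square-hgp-forced-8-cycles} and check that distinct parameter choices give distinct cycles. For $H_X(B)$, a forced 8-cycle is determined by an unordered pair of distinct rows $\{i,i'\}$, an unordered pair of distinct columns $\{j,j'\}$, a common column $u\in R_i\cap R_{i'}$, and a common row $v\in C_j\cap C_{j'}$. I would reparametrize this by the witnesses. Fix $u$: the number of unordered pairs $\{i,i'\}$ with $u\in R_i\cap R_{i'}$ equals the number of pairs of distinct elements of the column support $C_u$, which is $\binom{w}{2}$ by $w$-regularity. Symmetrically, for fixed $v$ the number of unordered pairs $\{j,j'\}\subseteq R_v$ is $\binom{w}{2}$. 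Letting $(u,v)$ range over all $s^2$ index pairs then gives the count $s^2\binom{w}{2}^2$.

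Next I would show that these configurations $(u,v,\{i,i'\},\{j,j'\})$ give pairwise distinct 8-cycles. From the vertex set of the displayed cycle one can read off the data. The check vertices are $(i,j),(i',j),(i',j'),(i,j')$, which recover $\{i,i'\}$ and $\{j,j'\}$. The first-block variables $(u,j),(u,j')$ recover $u$, and the second-block variables $(i,v)^\star,(i',v)^\star$ recover $v$. The cycle is therefore determined by its vertex set, so the map is injective. Reversing orientation or changing the starting vertex does not change the cycle as a subgraph, and unordered pairs avoid double counting.

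Then I would pass to the lift. By Theorem~\ref{thm:square-hgp-forced-8-cycles}, every such cycle has voltage $0\pmod P$ under the CSS zero constraints, so it lifts to $P$ closed 8-cycles. Those lifted cycles project to distinct base cycles, which gives at least $N_8(B)$ Tanner 8-cycles in the lift. The $H_Z$ case follows by swapping rows and columns.

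For the girth claim I would argue by projection. The lift is a graph cover, so any $\ell$-cycle in the lifted graph projects to a closed walk of length $\ell$ in the base graph. A closed non-backtracking walk of length 4 or 6 in a simple bipartite graph contains a cycle of length at most 6. So if the base Tanner graphs have no 4- or 6-cycles, the lifts have none either. Since $N_8(B)>0$ whenever $w\ge 2$, the girth is exactly 8 for every $P$.

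I expect the main obstacle to be the projection step. I need that a lifted cycle projects to a non-backtracking closed walk, which holds because each lifted vertex has its neighbors in distinct fibers. I also need that a short non-backtracking closed walk contains a short cycle. This is standard, but I would state it carefully. I would also note that the case $w\ge2$ is needed for $N_8(B)>0$, since the girth statement is vacuous otherwise.
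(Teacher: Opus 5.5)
Your proposal is correct and follows essentially the same route as the paper: you count row-pair/shared-column and column-pair/shared-row choices as $s\binom{w}{2}$ each, invoke the forced-cycle theorem, and then use the cover-projection and non-backtracking argument to exclude lifted 4- and 6-cycles. You also add two details the paper leaves implicit: the injectivity check, which recovers $\{i,i'\}$, $\{j,j'\}$, $u$, $v$ from the vertex set, and the caveat that $w\ge 2$ is needed for $N_8(B)>0$ (for $w=1$ the Tanner graphs are acyclic, so the girth-8 conclusion would actually fail).
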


\begin{proof}
Since $B$ is $w$-regular, the number of choices of a row pair together with a shared column is
\[
\sum_{u=1}^{s}\binom{w}{2}=s\binom{w}{2}.
\]
The number of choices of a column pair together with a shared row is the same. Theorem~\ref{thm:square-hgp-forced-8-cycles} gives one forced Tanner 8-cycle from each pair of such choices, hence $N_8(B)$ cycles on the $H_X$ side and the same number on the $H_Z$ side.

A CPM lift is a cover of the Tanner graph. By the standard graph-cover projection property, a cycle in the lift projects to a closed walk in the base graph~\cite{GrossTucker1987TopologicalGraphTheory}. Indeed, the projection of a simple cycle in the cover is a non-backtracking closed walk in the base graph, and every non-backtracking closed walk contains a simple cycle of length no larger than the walk length. Hence, if the Tanner graphs of the base HGP matrices have no 4-cycles and no 6-cycles, the lift cannot create new 4-cycles or 6-cycles in the Tanner graphs of $\hat{H}_X$ or $\hat{H}_Z$. Theorem~\ref{thm:square-hgp-forced-8-cycles} shows that 8-cycles remain for every $P$, so each of these two Tanner graphs has girth exactly $8$.
\end{proof}

The next theorem records the rank deficiency that cannot be lost under a permutation lift. In the square HGP codes considered in this paper, it implies that the number of logical qubits in the base code is a lower bound on the number of logical qubits after lifting.

\begin{theorem}[Logical-qubit lower bound after CPM lifting]
\label{thm:cpm-lift-k-lower-bound}
Let a base CSS code be specified by binary matrices $H_X,H_Z$ of length $n$ with $m_X,m_Z$ check rows, and let
\[
k=n-\mathrm{rank}(H_X)-\mathrm{rank}(H_Z)
\]
be its number of logical qubits. Let $\hat{H}_X,\hat{H}_Z$ be a $P$ lift obtained by replacing each nonzero entry by a $P\times P$ permutation matrix, and assume that the lifted CSS orthogonality condition is satisfied. Then
\[
\hat{k} \ge k+(P-1)(n-m_X-m_Z).
\]
In particular, if $n=m_X+m_Z$, as in the square HGP codes used in this paper, then
\[
\hat{k}\ge k.
\]
\end{theorem}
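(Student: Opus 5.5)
The plan is to bound each lifted rank from above by exhibiting enough linear dependencies among the lifted check rows. Write $\rho_X=\mathrm{rank}(H_X)$ and $\rho_Z=\mathrm{rank}(H_Z)$. Since $\hat{k}=Pn-\mathrm{rank}(\hat H_X)-\mathrm{rank}(\hat H_Z)$, the claimed inequality rearranges to
\[
\mathrm{rank}(\hat H_X)+\mathrm{rank}(\hat H_Z)\le \rho_X+\rho_Z+(P-1)(m_X+m_Z).
\]
It therefore suffices to prove, separately for each side, that $\mathrm{rank}(\hat H_X)\le \rho_X+(P-1)m_X$, and likewise for $Z$. Equivalently, we need the left kernel of $\hat H_X$ over $\mathbb F_2$ to have dimension at least $m_X-\rho_X$, which is the dimension of the left kernel of $H_X$.

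\textbf{Lifting the dependencies.} Let $\mathbf 1_P$ denote the all-ones column vector of length $P$. For a vector $\vect{y}\in\mathbb F_2^{m_X}$ with $\vect{y}^{\mathsf T}H_X=0$, consider the lifted vector $\vect{y}\otimes\mathbf 1_P$. The $(c,\cdot)$ column block of $(\vect{y}\otimes\mathbf 1_P)^{\mathsf T}\hat H_X$ equals
\[
\sum_{x:\,(H_X)_{x,c}=1} y_x\,\mathbf 1_P^{\mathsf T}\Pi_{x,c}.
\]
Here $\Pi_{x,c}$ is the $P\times P$ permutation block placed at the nonzero entry $(x,c)$. Every permutation matrix satisfies $\mathbf 1_P^{\mathsf T}\Pi=\mathbf 1_P^{\mathsf T}$. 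Hence this block equals $\bigl(\vect{y}^{\mathsf T}H_X\bigr)_c\,\mathbf 1_P^{\mathsf T}=0$. Compactly,
\[
(\vect{y}\otimes\mathbf 1_P)^{\mathsf T}\hat H_X=(\vect{y}^{\mathsf T}H_X)\otimes\mathbf 1_P^{\mathsf T}=0.
\]

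\textbf{Counting and conclusion.} The map $\vect{y}\mapsto\vect{y}\otimes\mathbf 1_P$ is linear and injective. So the left kernel of $\hat H_X$ has dimension at least $m_X-\rho_X$, and therefore
\[
\mathrm{rank}(\hat H_X)\le Pm_X-(m_X-\rho_X)=\rho_X+(P-1)m_X.
\]
The same argument gives the bound for $\hat H_Z$. Substituting both bounds into the expression for $\hat{k}$ yields
\[
\hat{k}\ge Pn-\rho_X-\rho_Z-(P-1)(m_X+m_Z)=k+(P-1)(n-m_X-m_Z).
\]
The special case $n=m_X+m_Z$ then gives $\hat{k}\ge k$ immediately.

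The lifted orthogonality assumption is used only so that $\hat{k}$ is the logical-qubit count of a genuine CSS code; the rank bound itself does not need it. The only step requiring care is the identity $\mathbf 1_P^{\mathsf T}\Pi=\mathbf 1_P^{\mathsf T}$, so I do not expect any real obstacle. The one subtlety is that this identity holds for arbitrary permutation blocks, not only circulant ones, so the argument covers general permutation lifts as the theorem states.
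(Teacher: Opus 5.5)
Your proof is correct and takes essentially the same route as the paper. Like the paper, you lift each left-kernel vector of $H_X$ (and of $H_Z$) to the sheet-constant vector $\vect{y}\otimes\vect{1}_P$, use that permutation blocks fix the all-ones vector and that this lifting is injective to get $\hat\ell_X\ge\ell_X$ and $\hat\ell_Z\ge\ell_Z$, and then substitute into $\hat k=P(n-m_X-m_Z)+\hat\ell_X+\hat\ell_Z$; your remark that lifted orthogonality plays no role in the rank bound also agrees with the paper's argument.
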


\begin{proof}
Set $\ell_X=m_X-\mathrm{rank}(H_X)$ and $\ell_Z=m_Z-\mathrm{rank}(H_Z)$. The number $\ell_X$ is the dimension of the space of row dependencies of $H_X$, equivalently the space of column vectors $\vect{u}$ satisfying $H_X^{\mathsf T}\vect{u}=\vect{0}$. Given such a row dependency $\vect{u}$, define the lifted column vector $\hat{\vect{u}}$ by making its $i$th block equal to $u_i\vect{1}_P$, where $\vect{1}_P$ is the all-one column vector of length $P$.

Every permutation matrix preserves $\vect{1}_P$. This is the constant-on-fibers subspace associated with a permutation cover, and it is the same elementary mechanism used in circulant-permutation lifts~\cite{GrossTucker1987TopologicalGraphTheory,Fossorier2004CirculantPermutation}. Therefore, whenever an even number of base rows cancels in a base column, the corresponding lifted block rows also cancel in that column block. Hence $\hat{H}_X^{\mathsf T}\hat{\vect{u}}=\vect{0}$. The map $\vect{u}\mapsto \hat{\vect{u}}$ is injective, so the lifted left-kernel dimension satisfies $\hat{\ell}_X\ge \ell_X$. The same argument gives $\hat{\ell}_Z\ge \ell_Z$.

Thus
\[
\mathrm{rank}(\hat{H}_X)=P m_X-\hat{\ell}_X,\qquad
\mathrm{rank}(\hat{H}_Z)=P m_Z-\hat{\ell}_Z,
\]
and
\[
\begin{aligned}
\hat{k}
&=Pn-(P m_X-\hat{\ell}_X)-(P m_Z-\hat{\ell}_Z)\\
&=P(n-m_X-m_Z)+\hat{\ell}_X+\hat{\ell}_Z\\
&\ge P(n-m_X-m_Z)+\ell_X+\ell_Z\\
&=k+(P-1)(n-m_X-m_Z).
\end{aligned}
\]
When $n=m_X+m_Z$, this reduces to $\hat{k}\ge k$.
\end{proof}

\section{Base-Matrix Design}

Theorem~\ref{thm:regularity-condition}, Lemma~\ref{lem:short-cycle-removal}, and the square-HGP parameter formulas above show that the object to be designed is not the HGP matrix after expansion, but the square binary input matrix $B$. The row and column weights of $B$ determine the degrees after the HGP step, the corank of $B$ determines the number of logical qubits in the base code, and the overlap structure of $B$ controls short cycles after the HGP step. We therefore use a general target column weight $w$ and design $B$ according to the following criteria.

\begin{enumerate}
\item The matrix $B$ is square and binary. This allows the square-HGP formulas above to be used directly and gives $n=m_X+m_Z$ for the base code. Hence the design quantum rate of this design is $R_{\mathrm{des}}=0$.
\item Choose a target weight $w$ and make all rows and columns of $B$ have weight $w$. The resulting HGP CSS code is then $(w,2w)$-regular, with variable column weight $w$ and check row weight $2w$. The main concrete examples use $w=3$, and $w=4$ is also included.
\item The corank of $B$ is positive and chosen according to the desired value of $k=2c_B^2$. If the corank is zero, the base code has no logical qubits; larger corank gives a larger base value of $k$.
\item To exclude 4-cycles and 6-cycles, any two distinct rows and any two distinct columns share at most one common 1, and the Tanner graph of $B$ has no simple 6-cycles. These are sufficient conditions for excluding 4-cycles and 6-cycles after the HGP step. For the girth-6 example, we keep the overlap condition but allow simple 6-cycles.
\item The base-matrix size is kept finite and explicit, and length is increased by CPM lifting when needed. The lifted value of $k$ still has to be computed for each lift, but Theorem~\ref{thm:cpm-lift-k-lower-bound} ensures that the base value of $k$ remains a lower bound for square HGP codes.
\end{enumerate}

These are not necessary conditions; they are checkable design conditions used in this paper. Randomized local search can also find matrices satisfying them, but the main examples are presented through known finite-geometric incidence structures and simple combinatorial modifications because their regularity, intersection properties, and rank behavior are easier to verify. We use five column-weight-three base matrices and two column-weight-four finite-geometric base matrices. The Fano-plane incidence matrix is denoted by $B_7$ and gives a girth-6 example. The point-line incidence matrix of the generalized quadrangle $W(2)$ is denoted by $B_{15}$ and gives a girth-8 example with larger corank. A connected $30\times 30$ edge-switched enlargement of two $B_{15}$ copies is denoted by $B_{30}$ and gives a larger-corank girth-8 example. A separately selected $17\times 17$ low-corank matrix with the same target weight and the same 4-cycle/6-cycle exclusion conditions is denoted by $B_{17}$. A randomized local-search example with size $18$ is included as a supplementary example and is denoted by $B_{18}$. For column weight four, the incidence matrices of the projective plane $\mathrm{PG}(2,3)$ and the symplectic generalized quadrangle $W(3)$ are denoted by $B_{13}$ and $B_{40}$, respectively~\cite{Hirschfeld1998ProjectiveGeometries,PayneThas2009GQ}. Parameters of the base HGP code obtained from a base matrix $B_s$ carry the subscript $s$. For example, $n_7$ and $k_7$ denote the length and number of logical qubits of the base code obtained from $B_7$, while $n_{15}$ and $k_{15}$ denote the corresponding quantities for $B_{15}$. Quantities after a CPM lift carry both a hat and, when needed, a subscript indicating the base matrix and the lift size, as in $\hat{n}_{15,64}$.

A finite generalized quadrangle is a point-line incidence structure whose incidence graph has diameter four and girth eight; equivalently, it has no geometric triangles and satisfies the usual quadrangle incidence axiom~\cite{PayneThas2009GQ}. The symplectic generalized quadrangle $W(q)$ is the standard finite-geometric example of order $(q,q)$, so it has $(q^2+1)(q+1)$ points and the same number of lines, with $q+1$ points on each line and $q+1$ lines through each point~\cite{PayneThas2009GQ}. Thus $W(2)$ gives the $15\times 15$ column-weight-three incidence matrix used for $B_{15}$, and $W(3)$ gives the $40\times 40$ column-weight-four incidence matrix used for $B_{40}$.

Corollary~\ref{cor:unavoidable-8-cycle-count} gives the number of Tanner 8-cycles that cannot be removed by CPM lifting for these regular square base matrices. On each side, the number of forced 8-cycles is $441$ for the Fano-plane base $B_7$, $2025$ for the generalized-quadrangle base $B_{15}$, $8100$ for the connected enlarged base $B_{30}$, $2601$ for the low-corank base $B_{17}$, $2916$ for the randomized base $B_{18}$, $6084$ for the projective-plane base $B_{13}$, and $57600$ for the generalized-quadrangle base $B_{40}$. In particular, the bases $B_{15}$, $B_{30}$, $B_{17}$, $B_{18}$, and $B_{40}$ have no 4-cycles or 6-cycles after the HGP step, so every CPM $P$ lift from these bases gives Tanner girth $8$ for both lifted matrices $\hat{H}_X$ and $\hat{H}_Z$. Thus increasing $P$ cannot produce girth at least $10$ for these two Tanner graphs as long as the same CPM-lift orthogonality constraints are imposed.

\subsection{Design Criteria for Increasing the Base Number of Logical Qubits}

If the goal is to increase the number of logical qubits in the base code, the direct design target is the corank of the base matrix $B$, not the expanded HGP matrices themselves. For square HGP codes,
\[
\begin{aligned}
n&=2s^2,\qquad k_{\mathrm{des}}=0,\qquad R_{\mathrm{des}}=0,\\
k&=2c_B^2,\qquad R=\frac{k}{n}=\left(\frac{c_B}{s}\right)^2.
\end{aligned}
\]
Thus the design problem in this subsection does not make the design quantum rate positive; it keeps $R_{\mathrm{des}}=0$ and increases the actual rank deficiency. A target value $k_{\mathrm{target}}$ requires a base matrix with
\[
c_B\ge \left\lceil \sqrt{k_{\mathrm{target}}/2}\right\rceil.
\]
Increasing corank alone, however, is not a sufficient finite-length design principle. Low-weight kernel vectors, repeated local structures, and disconnected component decompositions can increase $k$ while degrading distance or decoding behavior.

We use the following design order. First, impose the row and column weight $w$ and the short-cycle conditions as constraints: every row and column has weight $w$, any two rows and any two columns share at most one common 1, and the Tanner graph of $B$ has no simple 6-cycles. Second, maximize the corank under these constraints. This can be specified either by minimizing the rank directly or by choosing a target corank $c_0$ and selecting candidates satisfying $\mathrm{rank}(B)\le s-c_0$. Third, check that the kernel support is not restricted to a small set of coordinates. This means checking for low-weight vectors in $\ker B$, imbalance in row and column supports, and connectedness of the Tanner graph of $B$. Fourth, among candidates with the same $c_B$, prefer those with fewer low-weight classical codewords and fewer low-weight row dependencies. These are finite-length criteria for reducing directly identifiable sources of low-weight logical operators after the HGP step.

There are two structured construction routes for realizing this strategy. One is to use incidence matrices from finite geometries or block designs where regularity, absence of short cycles, and low rank are known from the structure. In this approach, the rank deficiency is explained by symmetry or incidence relations, so the construction is easier to state and verify. The other approach is to prescribe part of the kernel in advance. Choose independent column vectors $\vect{u}_1,\ldots,\vect{u}_{c_0}$ and construct $B$ so that
\[
B\vect{u}_a=\vect{0}\qquad (a=1,\ldots,c_0).
\]
Each row support must then have even overlap with each $\vect{u}_a$. For $w=3$, this means that each row selects three column signatures in $\mathbb{F}_2^{c_0}$ whose sum is zero. After imposing these parity constraints, candidates are filtered by column weight, row-pair and column-pair overlaps, 6-cycles, and connectedness. This guarantees $c_B\ge c_0$ at the design stage, and the final value of $c_B$ is certified by an explicit rank computation.

Disconnected base matrices, or base matrices with a direct-sum structure, can also increase corank. Such growth is usually caused by putting independent components side by side, and it should not be interpreted as an improvement in finite-length distance or decoding behavior. Randomized local search is another possible way to find examples under the same finite checks, but it does not by itself explain why the rank deficiency or short-cycle structure occurs. Therefore, when we say that the base-code value of $k$ is increased, we also require the Tanner graph of $B$ to be connected and the kernel vectors not to be confined to small components.

\subsection{\texorpdfstring{$B_7$}{B7}: Girth-6 Example from the Fano Plane}

The first example does not impose the absence of simple 6-cycles. The Fano plane is the projective plane $\mathrm{PG}(2,2)$: it has seven points and seven lines, each line contains three points, and each point lies on three lines~\cite{Hirschfeld1998ProjectiveGeometries}. Let $B_7$ be its $7\times 7$ point-line incidence matrix. Any two distinct lines meet in one point, and any two distinct points determine one line, so two distinct rows and two distinct columns have at most one common 1. However, the Fano plane contains geometric triangles, which correspond to simple 6-cycles in the point-line incidence graph. Thus $B_7$ gives a girth-6 $(3,6)$-regular CSS-HGP code.

The rank and corank over $\mathbb{F}_2$ are listed in Table~\ref{tab:base-examples-en}. The HGP code obtained from this base matrix is a $(3,6)$-regular CSS-HGP code with parameters $[[98,18,4]]$. Since it contains simple 6-cycles, it is kept separate from the base-matrix designs below that exclude both 4-cycles and 6-cycles.

\subsection{\texorpdfstring{$B_{15}$}{B15}: Finite-Geometric Girth-8 Design}

The next example is a finite-geometric construction. The symplectic generalized quadrangle $W(2)$ is a standard finite-geometric structure with three points on each line and three lines through each point~\cite{PayneThas2009GQ}. Let $B_{15}$ be its $15\times 15$ point-line incidence matrix. This matrix has row and column weight $w=3$. The incidence structure implies that any two rows and any two columns share at most one common 1, and the incidence graph of $W(2)$ has girth $8$, so it has no simple 6-cycles. Thus $B_{15}$ satisfies the design criteria above with $w=3$.

The rank and corank over $\mathbb{F}_2$ are listed in Table~\ref{tab:base-examples-en}. The HGP code obtained from this base matrix is a $(3,6)$-regular CSS code whose $H_X$ and $H_Z$ Tanner graphs both have girth $8$. Here $c_{15}=5$, so the base code has $k_{15}=2c_{15}^2=50$ logical qubits.

\subsection{\texorpdfstring{$B_{30}$}{B30}: Connected Girth-8 Enlargement with Larger Corank}

The next example keeps the same column weight and the same Tanner-girth condition while increasing the base corank. Start from two disjoint copies of $B_{15}$, with the second copy using row and column indices shifted by $15$. This direct sum has size $30$, column weight $3$, Tanner girth $8$, and corank $10$, but its Tanner graph has two connected components. To obtain a connected base matrix, replace the two entries
\[
(0,0),\qquad (15,15)
\]
by the two cross entries
\[
(0,15),\qquad (15,0).
\]
This operation preserves every row weight and column weight. Direct verification gives no row-pair intersections of size at least $2$, no simple 6-cycles, and one connected component. We denote the resulting matrix by $B_{30}$.

The rank and corank over $\mathbb{F}_2$ are listed in Table~\ref{tab:base-examples-en}. The base Tanner graph is connected, and the HGP code obtained from $B_{30}$ is a $(3,6)$-regular CSS-HGP code whose two Tanner graphs have girth $8$. The edge switch reduces the corank from $10$ for the disconnected direct sum to $c_{30}=9$, giving $k_{30}=2c_{30}^2=162$ logical qubits at length $n_{30}=1800$. The HGP distance obtained from the kernel distances of $B_{30}$ and $B_{30}^{\mathsf T}$ is $d_{30}=6$.

\subsection{\texorpdfstring{$B_{17}$}{B17}: Regular Base-Matrix Design with Small Corank}

This example specializes the general $(w,2w)$-regular design to $w=3$ and makes the base value of $k$ small by choosing a base matrix with small positive corank. We represent $B$ as the union of three permutation matrices, which fixes both row and column weights to $3$. Under this constraint, the base matrix is chosen from finite candidates by requiring no row-pair intersections of size at least $2$, no simple 6-cycles, positive corank, and small corank.

Write the selected base matrix as $B_{17}$. It has zero row pairs with intersection $\ge 2$, zero simple 6-cycles, and Tanner girth $8$. Its rank data and the resulting HGP-code parameters are shown in Table~\ref{tab:base-examples-en}. Since $c_{17}=1$, the base code has the smallest positive number of logical qubits in this square-HGP family, namely $k_{17}=2$.

\subsection{\texorpdfstring{$B_{18}$}{B18}: Randomized Local-Search Girth-8 Example}

The preceding examples show known incidence structures and a simple enlargement operation. To record that the same finite checks can also be met by search, we include one supplementary example obtained by randomized local search. The matrix is represented as the union of three permutation matrices, so the row and column weights are fixed to $3$ throughout the search. Local swaps in the permutations are used to explore candidates, and the final matrix is retained only after deterministic verification of the rank, row-pair intersections, simple 6-cycles, Tanner girth, and connectedness. The randomness is therefore used only to find the candidate; the stated properties are direct checks of the displayed matrix.

We denote the resulting $18\times 18$ matrix by $B_{18}$. It has rank $16$, corank $2$, no row-pair intersections of size at least $2$, no simple 6-cycles, Tanner girth $8$, and a connected Tanner graph. The HGP code obtained from this base matrix is a $(3,6)$-regular CSS-HGP code with parameters $[[648,8,12]]$.

\subsection{\texorpdfstring{$B_{13}$}{B13}: A \texorpdfstring{$(4,8)$}{(4,8)}-Regular Girth-6 Example from the Projective Plane \texorpdfstring{$\mathrm{PG}(2,3)$}{PG(2,3)}}

We next include a column-weight-four example obtained from the point-line incidence matrix of the projective plane $\mathrm{PG}(2,3)$. This plane has 13 points and 13 lines; each line contains 4 points, and each point lies on 4 lines~\cite{Hirschfeld1998ProjectiveGeometries}. Let $B_{13}$ denote its $13\times 13$ incidence matrix. Any two distinct lines meet in one point, and any two distinct points determine one line, so two distinct rows and two distinct columns share at most one common 1. On the other hand, projective planes contain geometric triangles, which correspond to simple 6-cycles in the point-line incidence graph, so the Tanner graph of $B_{13}$ has Tanner girth 6.

The rank and corank over $\mathbb{F}_2$, together with the HGP-code parameters, are listed in Table~\ref{tab:base-examples-en}. The HGP code obtained from this base matrix is a $(4,8)$-regular CSS-HGP code with variable column weight 4 and check row weight 8, and its parameters are $[[338,2,13]]$.

\subsection{\texorpdfstring{$B_{40}$}{B40}: A \texorpdfstring{$(4,8)$}{(4,8)}-Regular Girth-8 Example from the Generalized Quadrangle \texorpdfstring{$W(3)$}{W(3)}}

The final example is the $w=4$ finite-geometric construction from the point-line incidence matrix of the symplectic generalized quadrangle $W(3)$. This generalized quadrangle has 40 points and 40 lines; each line contains 4 points, and each point lies on 4 lines~\cite{PayneThas2009GQ}. Let $B_{40}$ denote its $40\times 40$ incidence matrix. The incidence graph of a generalized quadrangle has girth 8, so the Tanner graph of $B_{40}$ has no 4-cycles and no simple 6-cycles. Thus this example satisfies the 4-cycle/6-cycle exclusion conditions at $w=4$.

The rank and corank over $\mathbb{F}_2$, together with the HGP-code parameters, are listed in Table~\ref{tab:base-examples-en}. The HGP code obtained from this base matrix is a $(4,8)$-regular CSS-HGP code with variable column weight 4 and check row weight 8, and its parameters are $[[3200,450,8]]$.

The distance column $d$ is evaluated from the standard HGP distance formula
\[
d=\min(d_B,d_{B^{\mathsf T}}),
\]
where $d_B$ and $d_{B^{\mathsf T}}$ are the kernel distances of $B$ and its transpose~\cite{TillichZemor2014}.

The seven codes in Table~\ref{tab:base-examples-en} are obtained by applying the HGP construction to square binary base matrices. In this structural sense, they all belong to the known HGP construction class. The $[[98,18,4]]$ code obtained from the Fano-plane incidence matrix $B_7$ appears in the literature as a finite-length HGP example~\cite{HuangZhouFangZhaoYing2025VeriQEC}. The projective-plane and generalized-quadrangle incidence matrices $B_{13}$, $B_{15}$, and $B_{40}$ give explicit finite-geometric instances, $B_{30}$ is a connected edge-switched enlargement used to increase corank, $B_{17}$ is a low-corank instance selected according to the base-matrix design criteria, and $B_{18}$ is a randomized local-search instance. The table gives common columns for regular degree, Tanner girth, rank/corank, and finite-length parameters.

In Table~\ref{tab:base-examples-en}, $s$ is the base-matrix size, $\rho_B$ and $c_B$ are the rank and corank of the base matrix over $\mathbb F_2$, $d$ is the HGP distance obtained from the kernel distances above, $m_X,m_Z$ are check-row counts, $\rho_X,\rho_Z$ are check-matrix ranks, and $(d_{\mathrm v},d_{\mathrm c})$ gives the variable column weight and check row weight of the HGP Tanner graph.

\begin{table}[t]
\centering
\caption{Parameters of the regular base CSS-HGP codes used in Section 3}
\label{tab:base-examples-en}
\resizebox{\textwidth}{!}{%
\begin{tabular}{llcccccccccc}
\toprule
base matrix & construction & $s$ & $\rho_B$ & $c_B$ & Tanner girth of $B$ & $n$ & $k$ & $d$ & $m_X=m_Z$ & $\rho_X=\rho_Z$ & $(d_{\mathrm v},d_{\mathrm c})$ \\
\midrule
$B_7$ & Fano plane & $7$ & $4$ & $3$ & $6$ & $98$ & $18$ & $4$ & $49$ & $40$ & $(3,6)$ \\
$B_{15}$ & $W(2)$ generalized quadrangle & $15$ & $10$ & $5$ & $8$ & $450$ & $50$ & $6$ & $225$ & $200$ & $(3,6)$ \\
$B_{30}$ & connected edge-switched enlargement & $30$ & $21$ & $9$ & $8$ & $1800$ & $162$ & $6$ & $900$ & $819$ & $(3,6)$ \\
$B_{17}$ & low-corank design & $17$ & $16$ & $1$ & $8$ & $578$ & $2$ & $8$ & $289$ & $288$ & $(3,6)$ \\
$B_{18}$ & randomized local search & $18$ & $16$ & $2$ & $8$ & $648$ & $8$ & $12$ & $324$ & $320$ & $(3,6)$ \\
$B_{13}$ & $\mathrm{PG}(2,3)$ & $13$ & $12$ & $1$ & $6$ & $338$ & $2$ & $13$ & $169$ & $168$ & $(4,8)$ \\
$B_{40}$ & $W(3)$ generalized quadrangle & $40$ & $25$ & $15$ & $8$ & $3200$ & $450$ & $8$ & $1600$ & $1375$ & $(4,8)$ \\
\bottomrule
\end{tabular}%
}
\end{table}

\section{\texorpdfstring{CPM $P$ Lift}{CPM P Lift}}

The CPM lift is described by assigning a shift in $\mathbb Z_P$ to each nonzero block of the base CSS matrices. If the nonzero entry in row $i$ and column $c$ of $H_X$ is assigned shift $s^X_{i,c}$, and the corresponding entry of $H_Z$ is assigned shift $s^Z_{j,c}$, then the lifted entries are $P\times P$ circulant permutation matrices with those shifts.

CSS orthogonality is imposed as a system of linear congruences. For every pair of rows $i$ of $H_X$ and $j$ of $H_Z$, let $C_{ij}$ be the set of columns shared by the two rows. Since the base CSS matrices are orthogonal, $|C_{ij}|$ is even. We pair the shared columns and impose, for each pair $c,c'\in C_{ij}$,
\[
s^X_{i,c}-s^Z_{j,c}
-
s^X_{i,c'}+s^Z_{j,c'}
\equiv 0 \pmod P.
\]
These equations ensure that the lifted overlap between any $X$-check and any $Z$-check is even in every sheet difference, and therefore the lifted matrices still satisfy $\hat{H}_X\hat{H}_Z^{\mathsf T}=0$.

Short-cycle suppression is expressed by nonzero voltage conditions. For a Tanner cycle $C$ in the base graph, define its signed voltage sum $V(C)$ by adding the shifts along the cycle with alternating signs according to the direction of traversal. The lift contains a cycle above $C$ with the same length only when
\[
V(C)\equiv 0 \pmod P.
\]
Thus the construction problem is a finite constraint-satisfaction problem of the form
\[
A\vect{s}\equiv \vect{0}\pmod P,
\qquad
\vect{v}_C^{\mathsf T}\vect{s}\not\equiv 0\pmod P
\]
for the targeted short cycles $C$. Here $\vect{s}$ is the column vector collecting all shift variables, $A$ is the integer coefficient matrix of the CSS-orthogonality congruences, and $\vect{v}_C$ is the signed coefficient vector whose inner product with $\vect{s}$ equals the cycle voltage $V(C)$. The first part is linear, while the second part is a family of nonzero constraints. We use a randomized feasible walk in the kernel of the linear system: starting from a feasible lift, we propose random kernel moves, and accept only moves that keep every targeted nonzero cycle-voltage constraint satisfied. This procedure is not a uniform sampler over all feasible lifts; it is used only to obtain a shift assignment satisfying the stated zero and nonzero congruence constraints.

\subsection{\texorpdfstring{Randomized $P=64$ Lift of the Generalized-Quadrangle Base $B_{15}$}{Randomized P=64 Lift of the Generalized-Quadrangle Base B15}}

For the generalized-quadrangle base matrix $B_{15}$, we also construct a randomized $P=64$ lift with the same degree pair $(3,6)$. In this case, some Tanner 8-cycles are forced by the CSS orthogonality equations. We therefore remove only those forced 8-cycles from the nonzero-constraint list and require all remaining targeted cycles of length at most 10 to have nonzero voltage. The resulting lifted code has the verified parameters
\[
\hat{n}_{15,64}=28800,\qquad
\hat{m}_{X,15,64}=\hat{m}_{Z,15,64}=14400,
\]
\[
\hat{\rho}_{X,15,64}=\hat{\rho}_{Z,15,64}=14369,
\qquad
\hat{k}_{15,64}=62.
\]
The row weight remains $6$ and the column weight remains $3$. Direct orthogonality verification gives zero odd-overlap pairs between $\hat{H}_X$ and $\hat{H}_Z$. Direct graph search on the actual lifted matrices also shows that the Tanner graph of $\hat{H}_X$ and the Tanner graph of $\hat{H}_Z$ are both connected; each graph has one connected component containing all $14400$ check vertices and all $28800$ variable vertices. Thus the increase from the base value $k_{15}=50$ to $\hat{k}_{15,64}=62$ is not explained by a decomposition into disconnected lifted copies. The short-cycle constraints used in the construction include all targeted avoidable 8-cycle constraints and all targeted 10-cycle constraints. The forced 8-cycles remain, so the Tanner graphs of $\hat{H}_X$ and $\hat{H}_Z$ each have girth $8$. This girth statement refers to the Tanner graph of each parity-check matrix separately; the combined graph containing both $X$- and $Z$-checks can contain 4-cycles from even $X/Z$ overlaps required by CSS orthogonality. The shift tables and verification scripts for this lift are provided as supplementary material.

We also give explicit distance upper-bound certificates. Index a lifted qubit by $(v,t)$, where $v\in\{0,\ldots,449\}$ is the base-column index and $t\in\mathbb Z_{64}$ is the fiber coordinate, and let $E=\{0,2,\ldots,62\}$. An $X$-type representative is supported on
\[
\{(v,t):v\in\{308,323,338,353,368,383\},\ t\in E\},
\]
and a $Z$-type representative is supported on
\[
\{(v,t):v\in\{4,19,49,64,109,139\},\ t\in E\}.
\]
Both vectors have weight $6\cdot32=192$. Direct binary computation gives $\hat H_Z\vect{x}^{\mathsf T}=0$ and $\hat H_X\vect{z}^{\mathsf T}=0$. Row reduction gives
\[
\begin{aligned}
\operatorname{rank}\hat H_X&=14369,
&\operatorname{rank}\begin{bmatrix}\hat H_X\\ \vect{x}\end{bmatrix}&=14370,\\
\operatorname{rank}\hat H_Z&=14369,
&\operatorname{rank}\begin{bmatrix}\hat H_Z\\ \vect{z}\end{bmatrix}&=14370.
\end{aligned}
\]
Thus neither representative is a stabilizer and
\[
d_X\le192,\qquad d_Z\le192,\qquad d\le192.
\]
The support lists and coordinate convention are also included in the supplementary material.
Complete enumeration on the same fixed matrices excludes every nontrivial logical support of even weight at most $16$ on both sides, while allowing stabilizer supports. Since every column has weight three, kernel vectors have even weight. The enumeration and its completeness argument are given in Appendix~\ref{sec:distance-enumeration}. Thus
\[
18\le d_X\le192,\qquad 18\le d_Z\le192,\qquad 18\le d\le192.
\]

\subsection{Connectedness and Number of Logical Qubits}

When interpreting the number of logical qubits after lifting, it is necessary to separate trivial growth caused by disconnected lifts from growth inside a connected lift. For example, if every nonzero entry is replaced by the same $P\times P$ identity permutation, then the lifted Tanner graph is the disjoint union of $P$ copies of the base Tanner graph. Up to row and column permutations, the lifted matrices are direct sums, and
\[
\hat{k} = Pk.
\]
More generally, if a disconnected lift decomposes into several connected components, then the total value of $k$ is the sum of the values of $k$ for those components. Thus, allowing disconnectedness makes it easy to increase the absolute value of $k$. This increase, however, only duplicates code components: the rate is unchanged, and the minimum distance is at most the minimum of the component distances. The nontrivial question in this paper is therefore whether one can obtain additional logical qubits while keeping the lift connected.

A CPM lift can be viewed as a graph cover described by a voltage graph~\cite{GrossTucker1987TopologicalGraphTheory,Fossorier2004CirculantPermutation}. When the base Tanner graph is connected and the lift uses the cyclic group $\mathbb Z_P$, the standard connectedness condition is that the subgroup of $\mathbb Z_P$ generated by voltage sums along closed walks is the whole group $\mathbb Z_P$. Equivalently, for cyclic lifts, the greatest common divisor of $P$ and the generated voltage sums must be $1$. For a general permutation lift, the corresponding condition is that the monodromy group generated by closed walks acts transitively on the $P$ sheets.

Increasing $k$ while preserving connectedness requires additional row dependencies that are not inherited from the base code. Let $\hat{\ell}_X=\hat{m}_X-\hat{\rho}_X$ and $\hat{\ell}_Z=\hat{m}_Z-\hat{\rho}_Z$ be the lifted left-kernel dimensions. In the square HGP setting of this paper, $\hat{n}=\hat{m}_X+\hat{m}_Z$, so
\[
\hat{k}=\hat{\ell}_X+\hat{\ell}_Z.
\]
Theorem~\ref{thm:cpm-lift-k-lower-bound} accounts only for base row dependencies lifted as vectors that are constant on the $P$ sheets. Therefore, obtaining $\hat{k}>k$ in a connected lift requires additional left-kernel vectors that are nonconstant in the sheet direction, namely vectors satisfying
\[
\hat{H}_X^{\mathsf T}\hat{\vect{u}}=\vect{0}
\quad\text{or}\quad
\hat{H}_Z^{\mathsf T}\hat{\vect{v}}=\vect{0},
\]
without destroying connectedness or the targeted short-cycle constraints.

This suggests the following construction strategies. First, impose CSS orthogonality, the short-cycle constraints, and connectedness as hard constraints, and then use rank as a secondary objective by minimizing $\hat{\rho}_X+\hat{\rho}_Z$, equivalently maximizing $\hat{\ell}_X+\hat{\ell}_Z$. Second, use a disconnected high-$k$ lift as an initial point and perturb shifts so that the generated voltage subgroup becomes all of $\mathbb Z_P$, accepting only perturbations that preserve rank deficiency. Third, prescribe candidate nonconstant left-kernel vectors and add the equations they impose to the lift constraints. None of these strategies is guaranteed to succeed: extra rank deficiency can conflict with short-cycle suppression and often disappears when the lift is made connected. Therefore, any claim of connected growth of $k$ must be certified by an explicit rank computation for the final lifted matrices.

\section{Decoding Method}

Decoding is performed by the BP decoder used in~\cite{OkadaKasai2026AffineCoset} for binary CSS codes under depolarizing noise. Ordered-statistics decoding (OSD) originates as a reliability-based classical decoding method~\cite{FossorierLin1995OSD}; in quantum LDPC decoding, BP followed by OSD-type repair has been used as an effective finite-length strategy~\cite{PoulinChung2008Iterative,PanteleevKalachev2021Degenerate}. Here ``binary'' means that the parity checks are binary sparse matrices $H_X$ and $H_Z$, while each variable node still carries a four-state Pauli prior over $I/X/Z/Y$. Messages exchanged with checks are binary $x$-bit and $z$-bit messages.

Rows of $H_X$ constrain the $z$ component, and rows of $H_Z$ constrain the $x$ component. The decoder supports two notions of success:
\begin{itemize}
\item exact success: the estimated error equals the sampled true error,
\item degenerate success: the residual error differs from the true error only by stabilizers, i.e., it lies in the corresponding row spaces of $H_X$ and $H_Z$.
\end{itemize}

The simulations use a degeneracy-aware success criterion. After BP, the reported decoder applies ordered-statistics-decoding-lite (OSD-lite) post-processing. In this paper, OSD-lite denotes a restricted reliability-based repair step applied only when BP terminates with a nonzero residual syndrome; it is not a full high-order OSD decoder. For each unrepaired side, candidate columns are ordered by a BP-derived flip cost computed from the four-state Pauli posterior, with smaller cost treated as less reliable and therefore more plausible to flip. The main repair step uses a joint prefix-search rule: the decoder solves the induced binary syndrome equation over $\mathbb F_2$ on reliability-ordered prefixes, applies a unique solution only when its support weight is at most $30$, and otherwise uses a restricted nonunique fallback only when the residual degree of freedom is at most $5$ and the selected correction weight is at most $20$. As a local fallback, candidate columns are selected from variables incident to unsatisfied checks and, when the number of unsatisfied checks is at most $32$, from their two-hop neighborhood. The local candidate-pool parameter is fixed to $2048$ variables in the plotted runs, with effective caps of $160$, $128$, and $96$ when the number of unsatisfied checks is larger than $6$, $12$, and $24$, respectively. The joint exact repair stage is capped at $128$ variables and $20$ degrees of freedom, and the ETS lookup table is disabled. These repair limits are fixed for all plotted points.

\section{Numerical Results}

\subsection{Experimental Setup}

The construction section defines several base matrices and lift choices, but the numerical experiments below use lifts derived from the generalized-quadrangle base $B_{15}$, especially the randomized $P=64$ lift. Performance is reported as the frame error rate (FER). For a point with $N$ independent decoding trials and $f$ failed frames, the plotted estimate is $\mathrm{FER}=f/N$, and the shaded band is the Wilson binomial 95\% confidence interval for this failure probability. When $f=0$, the marker is placed at the upper endpoint of the Wilson interval because $\mathrm{FER}=0$ cannot be displayed on a logarithmic vertical axis.

For comparison, the plot shows the depolarizing-channel hashing limit as the root $p_{\mathrm{hash}}\approx0.18929$ of $1-H_2(p)-p\log_2 3=0$, where $H_2$ is the binary entropy function~\cite{BennettDiVincenzoSmolinWootters1996}. The plot also includes the population-dynamics estimate $p_{\mathrm{DE}}\approx0.1529$ of the BP density-evolution threshold for the cycle-free $(3,6)$-regular CSS-LDPC ensemble on the depolarizing channel~\cite{RichardsonUrbanke2001Capacity,PoulinChung2008Iterative}. This value is an asymptotic CSS-LDPC comparison value for the same local degrees; it is not a direct threshold claim for any individual finite lifted code.

\begin{figure}[t]
\centering
\includegraphics[width=0.92\linewidth]{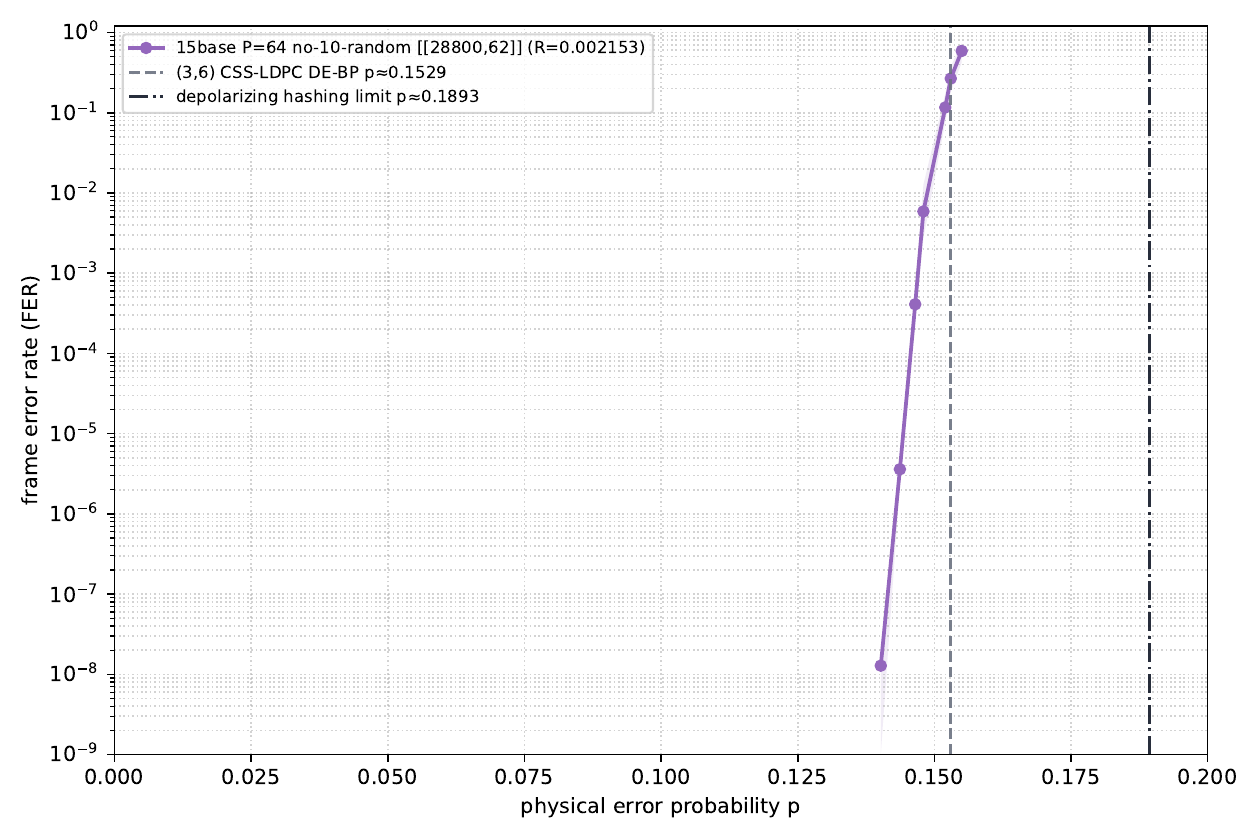}
\caption{$p$--FER plot for the randomized $P=64$ lift derived from the generalized-quadrangle base $B_{15}$. Each point aggregates independent decoding trials at the same $p$, using the degeneracy-aware BP decoder followed by OSD-lite post-processing; shaded bands indicate Wilson 95\% confidence intervals. Zero-failure points are plotted at the upper endpoint of the Wilson interval. The dashed vertical line is the population-dynamics estimate $p_{\mathrm{DE}}\approx0.1529$ of the BP density-evolution threshold for the $(3,6)$-regular CSS-LDPC ensemble on the depolarizing channel, and the dash-dotted vertical line is the depolarizing-channel hashing limit $p_{\mathrm{hash}}\approx0.18929$.}
\label{fig:p-vs-fer-en}
\end{figure}

Figure \ref{fig:p-vs-fer-en} gives the measured $p$--FER values for the randomized $P=64$ lift based on the generalized-quadrangle base $B_{15}$. Other base matrices and lifts are included in the construction section, but they are not decoded in separate experimental campaigns in this paper. The point at $p=0.1402$ aggregates $2.993\times 10^8$ independent decoding trials with zero observed failures, so its marker is placed at the Wilson 95\% upper endpoint $1.28\times 10^{-8}$. The average number of BP iterations increases in the transition region, which indicates that this code and decoder require more iterations there.

\subsection{Degenerate Success and Repair Behavior}

In the recorded successful trials, the dominant success mode is degenerate success rather than exact success. In other words, the decoder usually returns a stabilizer-equivalent correction rather than reproducing the exact sampled Pauli error. This is expected for stabilizer codes and confirms that degeneracy is necessary for interpreting the reported success rates.

All recorded decoding failures in these runs are syndrome-mismatch failures: after the decoder terminates, the residual syndrome is nonzero. Consequently, these failures are not syndrome-consistent logical operators and do not provide a distance upper bound. The decoding data are therefore used here as performance evidence, independently of the explicit structural certificate $d\le192$ above. OSD-lite provides additional repair successes at $p=0.143$, $0.145$, $0.150$, and $0.158$, but the current data do not indicate a substantial change in the FER transition due to that repair step.

\section{Discussion}

The construction results separate degree regularity, rank deficiency, and short-cycle structure in a square-base HGP family. The same design criteria give $(w,2w)$-regular CSS-HGP codes, with the explicit examples covering both $(3,6)$ and $(4,8)$ degrees. Since the design quantum rate is zero, all positive rate comes from corank in the base matrix and additional rank deficiency after lifting. CPM lifting preserves orthogonality but cannot remove the orthogonality-forced 8-cycles identified here; the decoding experiments evaluate the selected column-weight-three generalized-quadrangle lift under a degeneracy-aware BP+OSD-lite decoder.

The current limitations are as follows. The certified intervals are $18\le d_X,d_Z,d\le192$; tightening them and determining the exact distance remain open. The recorded decoding failures have nonzero residual syndrome, so they do not improve these structural upper bounds. Some high-noise FER points have large statistical uncertainty because their sample counts are small. Systematic comparisons across multiple lifts or multiple decoder variants remain incomplete.

\section{Conclusion}

This paper studied square-base CSS-HGP constructions using explicit $(3,6)$ and $(4,8)$ examples. The examples are organized mainly around known finite-geometric incidence structures and simple combinatorial modifications, while a randomized local-search instance is included only to show that the same finite checks can also be met by search. The $(3,6)$ examples are the Fano-plane girth-6 construction, the generalized-quadrangle girth-8 construction, a connected edge-switched girth-8 enlargement with larger corank, a low-corank girth-8 construction, and a randomized local-search girth-8 construction, while the $(4,8)$ examples come from the projective plane $\mathrm{PG}(2,3)$ and the generalized quadrangle $W(3)$. All examples have design quantum rate zero, and their positive values of $k$ are rank deficiencies. We used the standard voltage-sum criterion to identify CSS-orthogonality-forced Tanner 8-cycles, and specialized this statement to the square-HGP family to count forced 8-cycles. For the randomized $P=64$ lift of $B_{15}$, complete support enumeration and explicit non-stabilizer representatives give $18\le d_X,d_Z,d\le192$. The reported CPM-lift decoding experiments are limited to column-weight-three generalized-quadrangle-based lifts, especially this instance. Future work should add exact or tighter distance certification, systematic repair comparisons, replicated statistics across several lifted instances, and lift-design methods that increase the actual number of logical qubits $\hat{k}$ while preserving connectedness, CSS orthogonality, and the targeted short-cycle constraints. In particular, this requires constructing connected lifts with additional nonconstant left-kernel vectors, rather than relying only on the corank inherited from the base matrix.

\appendix

\section{Visual Representations of Base and Check Matrices}

This appendix gives visual representations of the square base matrices used in Table~\ref{tab:base-examples-en}. A black square denotes an entry equal to $1$ over $\mathbb F_2$, and a white square denotes an entry equal to $0$. The row and column orderings are the orderings used in the constructions in Section~3.

\begin{figure}[p]
\centering
\begin{minipage}{0.18\linewidth}
\centering
\includegraphics[width=\linewidth]{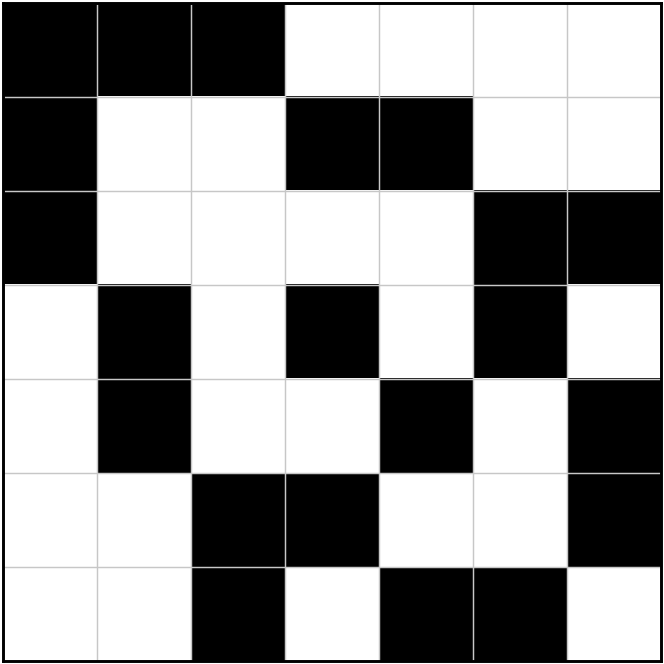}
\par\vspace{0.2em}\small $B_7$
\end{minipage}\hfill
\begin{minipage}{0.18\linewidth}
\centering
\includegraphics[width=\linewidth]{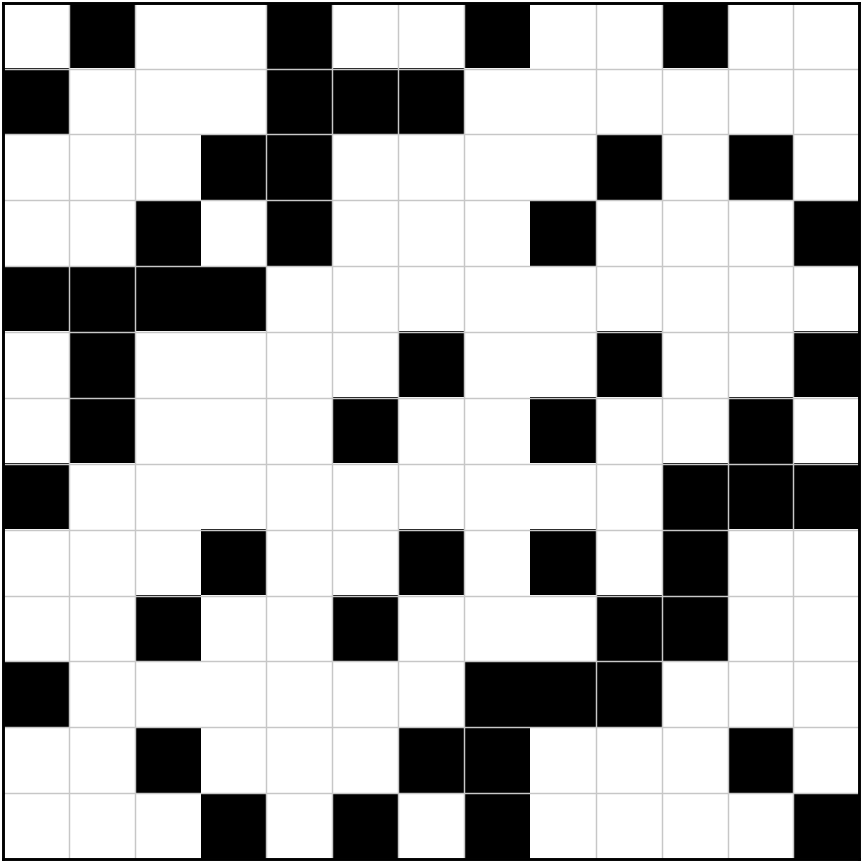}
\par\vspace{0.2em}\small $B_{13}$
\end{minipage}\hfill
\begin{minipage}{0.18\linewidth}
\centering
\includegraphics[width=\linewidth]{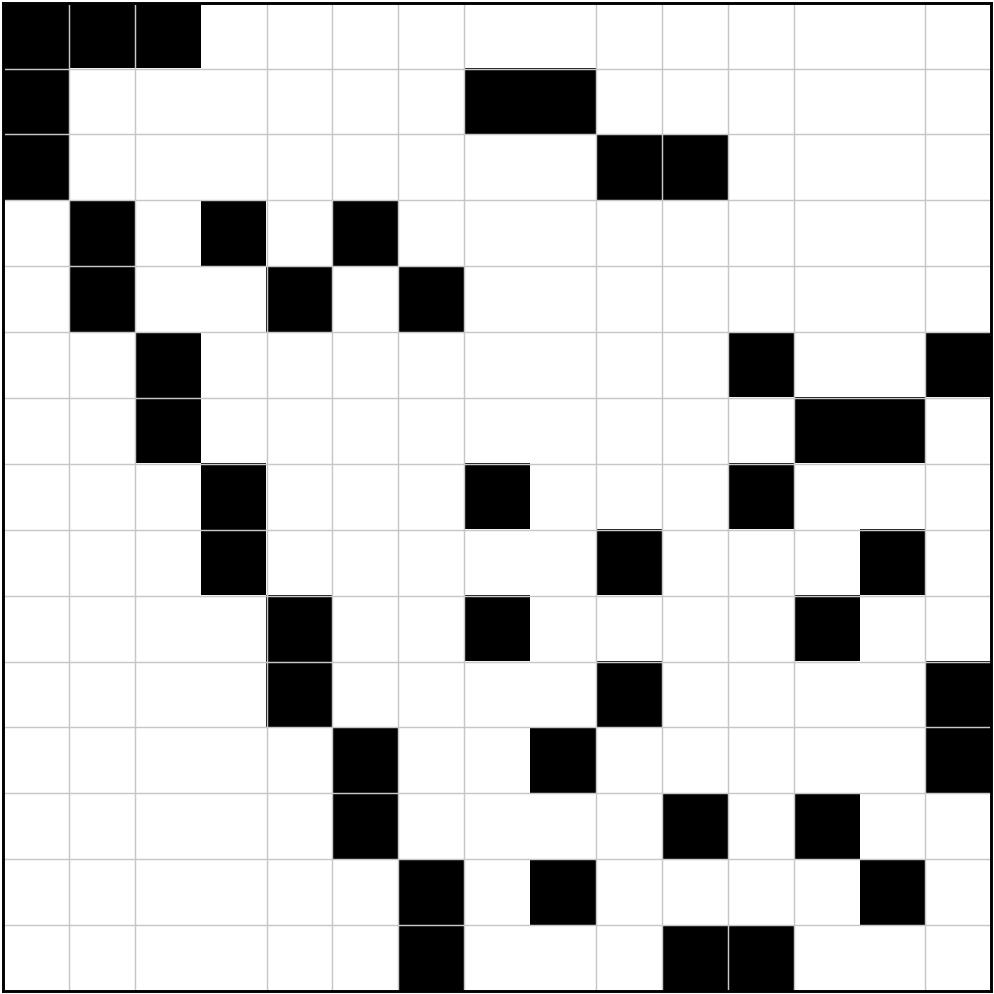}
\par\vspace{0.2em}\small $B_{15}$
\end{minipage}\hfill
\begin{minipage}{0.18\linewidth}
\centering
\includegraphics[width=\linewidth]{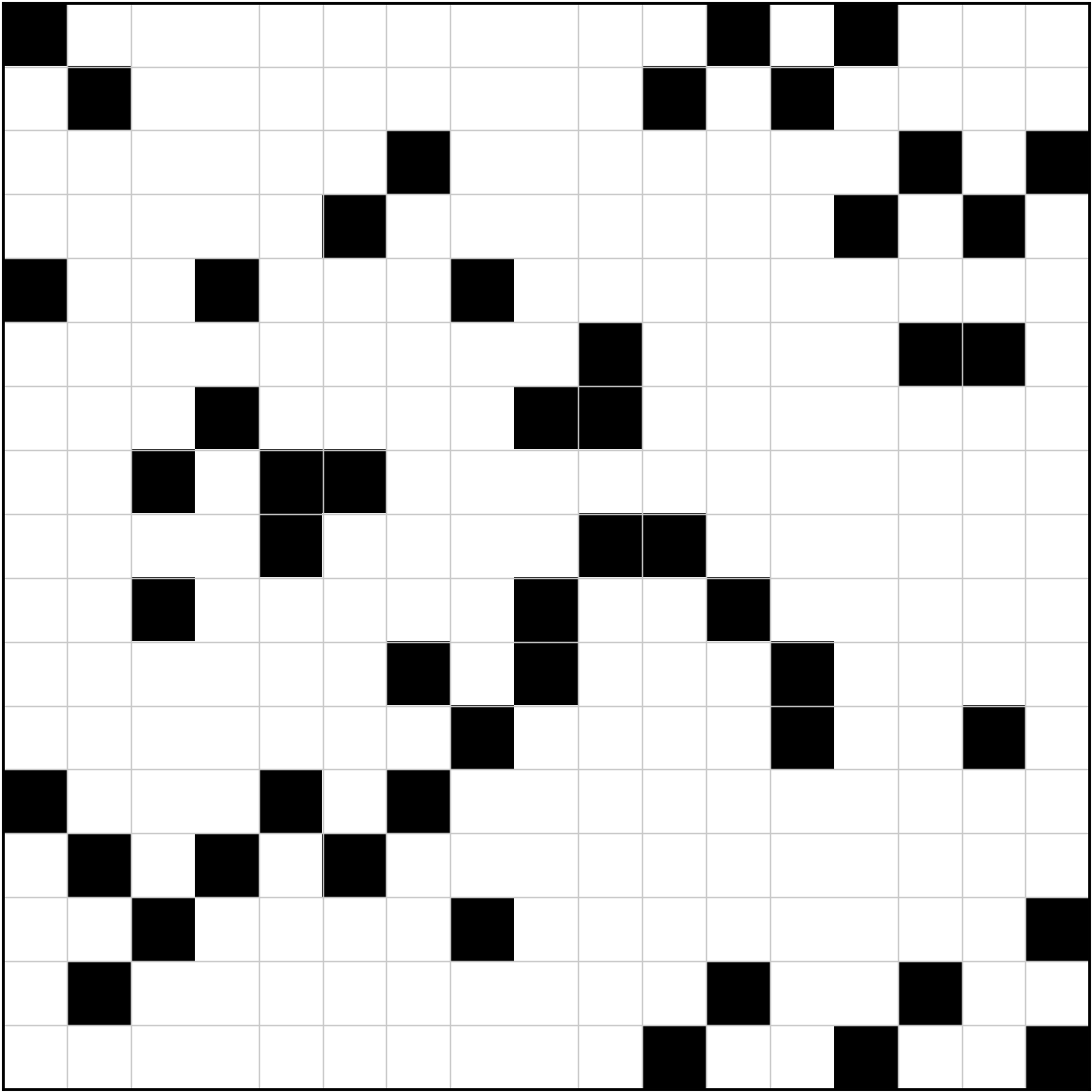}
\par\vspace{0.2em}\small $B_{17}$
\end{minipage}\hfill
\begin{minipage}{0.18\linewidth}
\centering
\includegraphics[width=\linewidth]{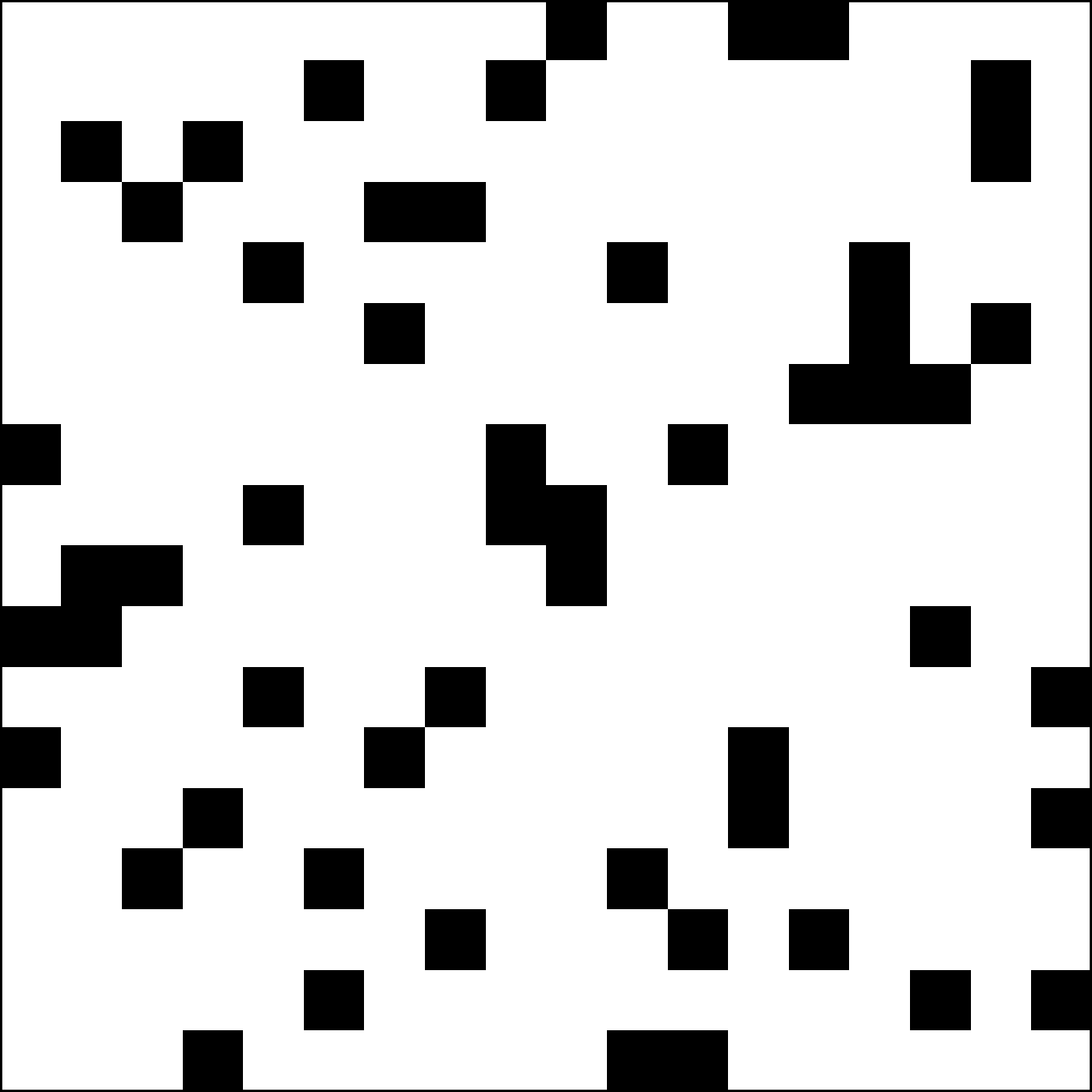}
\par\vspace{0.2em}\small $B_{18}$
\end{minipage}
\caption{Matrix plots of the smaller square base matrices.}
\label{fig:base-matrix-plots-small-en}
\end{figure}

\begin{figure}[p]
\centering
\begin{minipage}{0.44\linewidth}
\centering
\includegraphics[width=\linewidth]{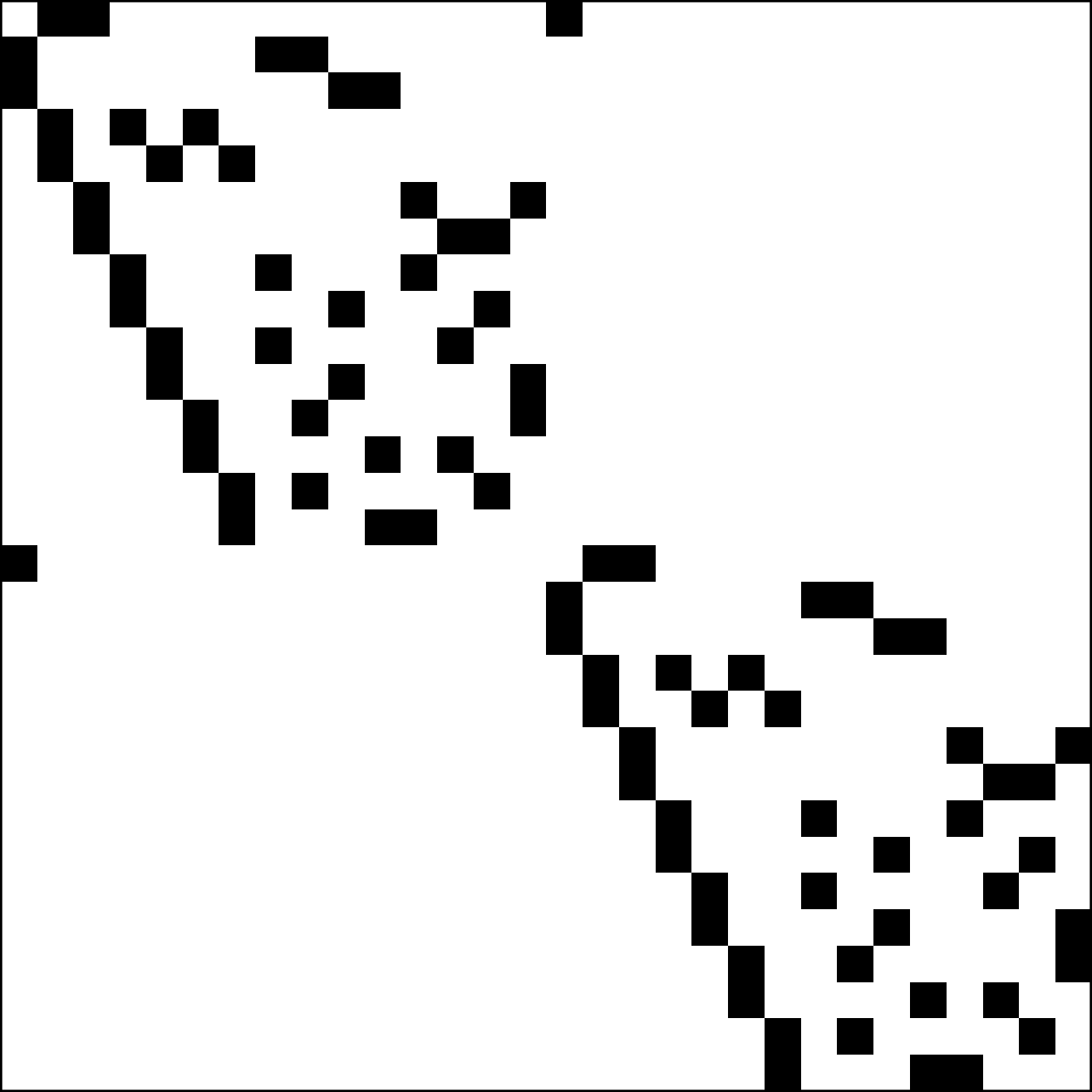}
\par\vspace{0.2em}\small $B_{30}$
\end{minipage}\hfill
\begin{minipage}{0.44\linewidth}
\centering
\includegraphics[width=\linewidth]{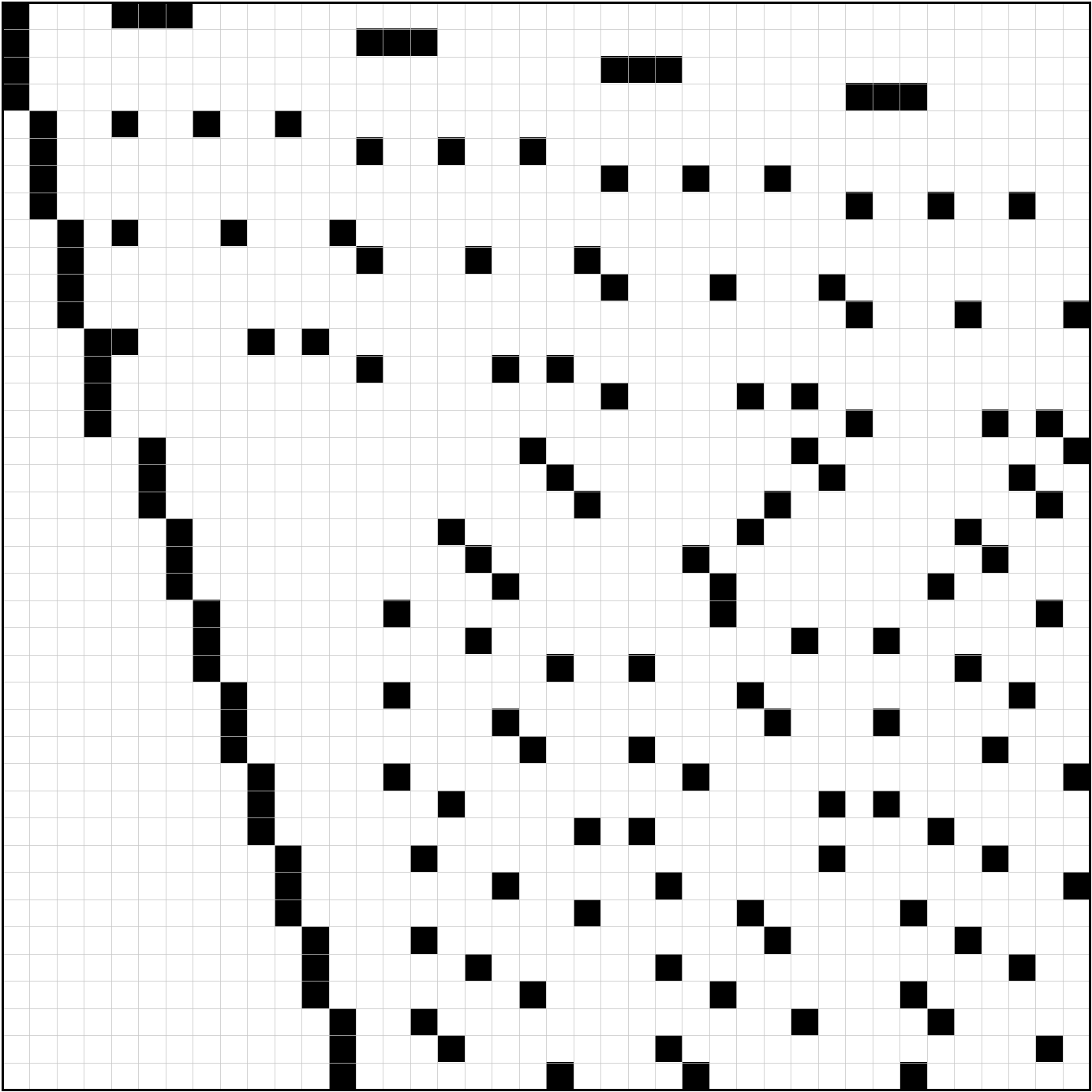}
\par\vspace{0.2em}\small $B_{40}$
\end{minipage}
\caption{Matrix plots of the larger square base matrices.}
\label{fig:base-matrix-plots-large-en}
\end{figure}

As a concrete example of the HGP check matrices, Fig.~\ref{fig:b7-hgp-check-matrix-plots-en} shows the matrices obtained from the Fano-plane base matrix $B_7$. Both matrices have size $49\times 98$. The vertical line separates the first $s^2$ variable columns and the second $s^2$ variable columns in the block definitions of $H_X(B)$ and $H_Z(B)$.

\begin{figure}[p]
\centering
\includegraphics[width=\linewidth]{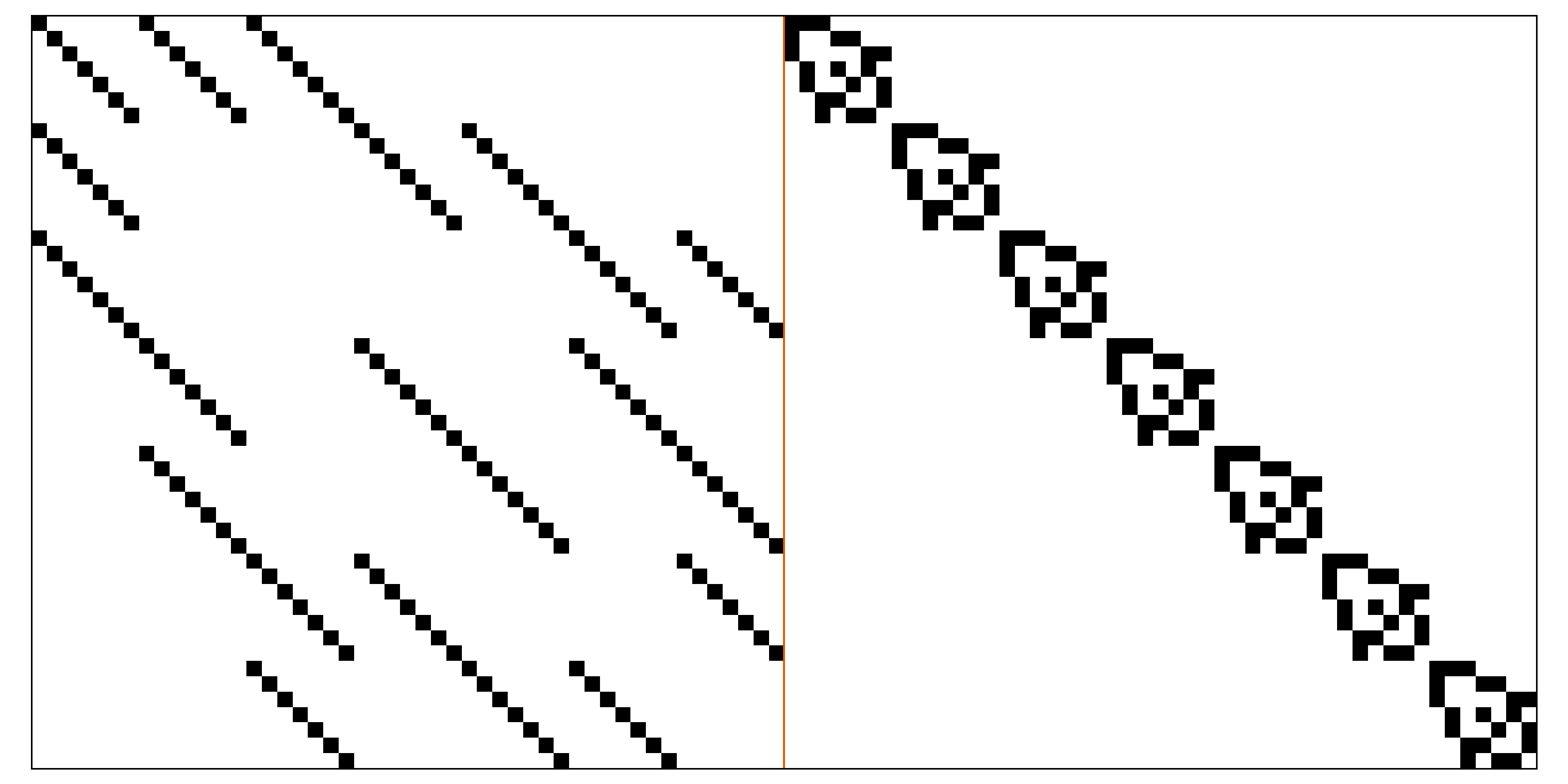}
\par\vspace{0.3em}\small $H_X(B_7)$
\vspace{1.0em}

\includegraphics[width=\linewidth]{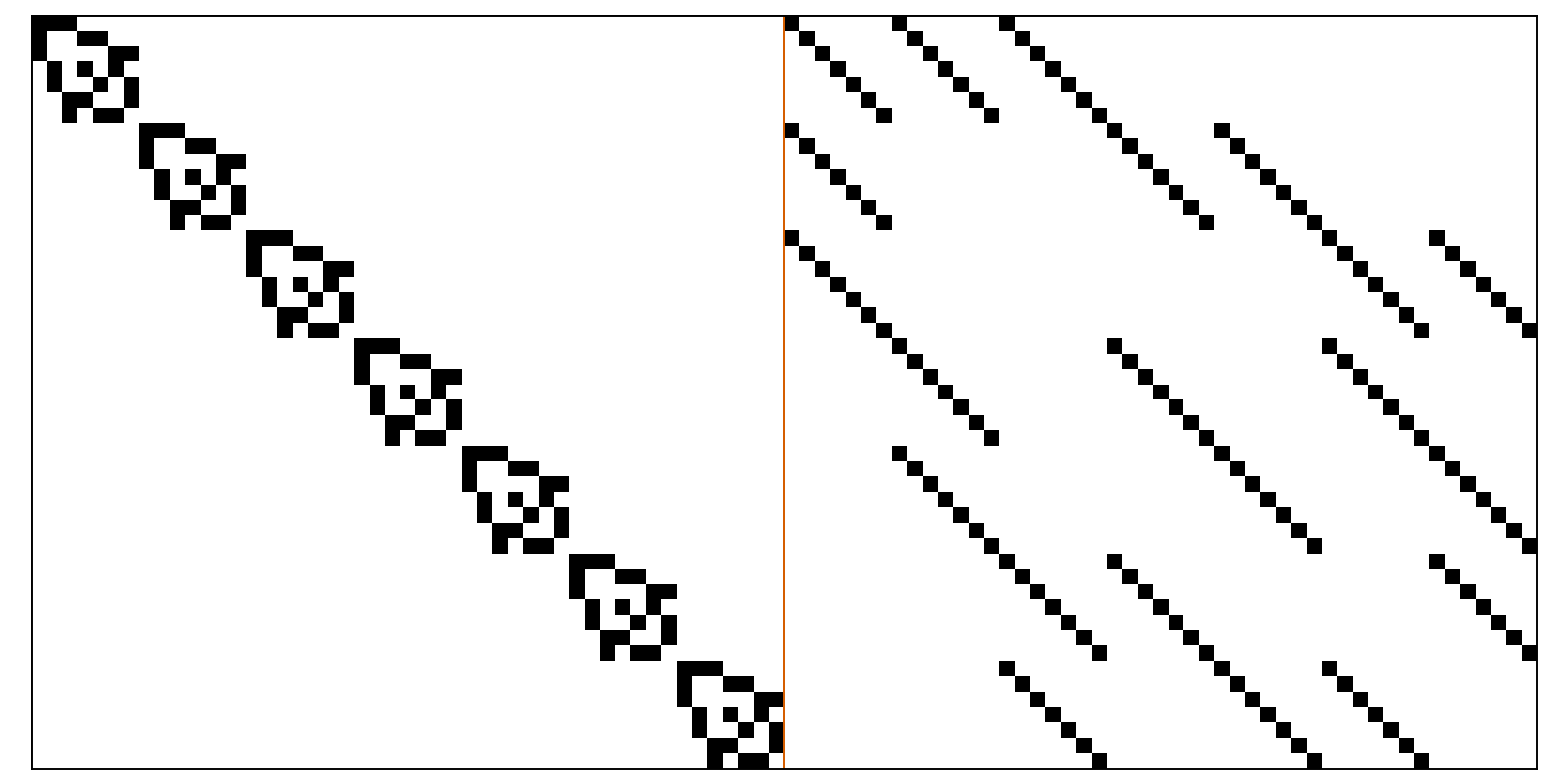}
\par\vspace{0.3em}\small $H_Z(B_7)$
\caption{Matrix plots of the HGP check matrices obtained from the Fano-plane base matrix $B_7$.}
\label{fig:b7-hgp-check-matrix-plots-en}
\end{figure}

\clearpage

\input{distance_enumeration}

\bibliographystyle{IEEEtran}
\bibliography{qldpc_refs}

\end{document}

%% file: distance_enumeration.tex
\section{Computer-Assisted Distance Lower Bounds}\label{sec:distance-enumeration}
For a binary CSS pair, the distances are the minimum weights in
$\ker H_Z\setminus\operatorname{row}(H_X)$ and
$\ker H_X\setminus\operatorname{row}(H_Z)$, respectively.
The computation tests membership in the stabilizer row space by exact binary elimination.
Light stabilizers are retained in the code and are not mistaken for logical operators.
Since every column of each matrix has odd weight three, summing all check equations shows that every kernel vector has even weight.

A minimum-weight nontrivial logical support can be taken to be connected under any pairing of its incident edges at each check.
Indeed, each component of a disconnected pairing is itself a kernel support.
If all components are stabilizers, so is their sum; otherwise a smaller nontrivial component contradicts minimality.
Similarly, a minimum nontrivial support cannot strictly contain a nonzero stabilizer support, since adding that stabilizer removes its support and decreases the weight.
These observations justify restricting the search to minimal connected supports and stopping a branch when its current support is already a stabilizer.

The HGP check rows need not admit the three-group partition used by a colored-matching enumeration.
Instead, the search starts at a variable and recursively chooses an unsatisfied check, branching over the additional columns that can make its intersection with the support even.
An exclusion set prevents duplicate sibling branches without omitting any admissible extension.
If the number of unsatisfied checks exceeds three times the number of remaining columns allowed by the weight limit, the branch cannot close and is discarded.
A zero-syndrome support is tested against the stabilizer row space; a stabilizer closure terminates the branch by the minimality argument above.
This enumeration covers minimum nontrivial supports without requiring a three-group partition of the checks.

The simultaneous cyclic shift of all lift indices preserves both parity-check matrices up to row permutations.
It therefore preserves the corresponding stabilizer row spaces as well as the kernels.
The search uses one root representative per base column under this verified symmetry.
All candidate supports are checked on the full lifted matrices.
Only fully completed weight levels contribute to a lower bound; a time-limited partial search contributes none.
For this $P=64$ instance, all even weights through $16$ were completed on both sides using $450$ root representatives. No nontrivial logical support was found. The weight-$18$ searches were interrupted and are not used in the lower bound. Thus $d_X,d_Z\ge18$.
The ancillary distance records include the input coefficients, source code, complete pattern lists where applicable, completed-weight logs, and file hashes.
The lower bounds are computer-assisted exhaustive-search results.
The explicit upper-bound representatives can be verified separately by syndrome and stabilizer checks.